\renewcommand{\paragraph}{%
  \@startsection{paragraph}{4}%
  {\z@}{1ex \@plus 1ex \@minus .2ex}{-1em}%
  {\normalfont\normalsize\bfseries}%
}
\definecolor{ForestGreen}{rgb}{0.1333,0.5451,0.1333}
\definecolor{DarkRed}{rgb}{0.8,0,0}
\definecolor{Red}{rgb}{1,0,0}
\def\thmt@refnamewithcomma #1#2#3,#4,#5\@nil{%
  \@xa\def\csname\thmt@envname #1utorefname\endcsname{#3}%
  \ifcsname #2refname\endcsname
    \csname #2refname\expandafter\endcsname\expandafter{\thmt@envname}{#3}{#4}%
  \fi
}
\declaretheorem[numberwithin=section,refname={Theorem,Theorems},Refname={Theorem,Theorems}]{theorem}
\declaretheorem[numberlike=theorem,refname={Lemma,Lemmas},Refname={Lemma,Lemmas}]{lemma}
\declaretheorem[numberlike=theorem,refname={Corollary,Corollaries},Refname={Corollary,Corollaries}]{corollary}
\renewcommand{\varepsilon}{\epsilon}
\newcommand{\congest}{\textsf{CONGEST} }
\newcommand{\cut}{w\xspace}
\newcommand{\id}{\ensuremath{{\sf id}}\xspace}
\def\polylog{\operatorname{polylog}}
\def\poly{\operatorname{poly}}
\newcommand{\shortOnly}[1]{\ifthenelse{\boolean{short}}{#1}{}}
\newcommand{\onlyShort}[1]{\ifthenelse{\boolean{short}}{#1}{}}
\newcommand{\longOnly}[1]{\ifthenelse{\boolean{short}}{}{#1}}
\newcommand{\onlyLong}[1]{\ifthenelse{\boolean{short}}{}{#1}}
\newcommand{\squishlist}{
 \begin{list}{$\bullet$}
  { \setlength{\itemsep}{0pt}
     \setlength{\parsep}{2pt}
     \setlength{\topsep}{2pt}
     \setlength{\partopsep}{0pt}
     \setlength{\leftmargin}{1.5em}
     \setlength{\labelwidth}{1em}
     \setlength{\labelsep}{0.5em} } }
\newcommand{\squishend}{
  \end{list}  }
\def\ShowComment{True}
\def\danupon#1{\marginpar{$\leftarrow$\fbox{D}}\footnote{$\Rightarrow$~{\sf #1 --Danupon}}}
\def\hsinhao#1{\marginpar{$\leftarrow$\fbox{H}}\footnote{$\Rightarrow$~{\sf #1 --Hsin-Hao}}}
\def\danupon#1{}
\def\hsinhao#1{}
\begin{document}
\begin{titlepage}
\title{Almost-Tight Distributed Minimum Cut Algorithms\thanks{The preliminary versions of this paper appeared as brief announcement papers at PODC 2014 and SPAA 2014 \cite{Nanongkai14_podc,Su14}.}} 
\author{
Danupon Nanongkai\thanks{Faculty of Computer Science, University of Vienna, W\"ahringer Stra{\ss}e 29, A-1090 Vienna, Austria. Email: {\tt danupon@gmail.com}. This work was partially done while at ICERM, Brown University USA and Nanyang Technological University, Singapore.} \\University of Vienna, Austria
\and
Hsin-Hao Su\thanks{2260 Hayward St.,~Department of EECS,~University of Michigan,~Ann Arbor,~MI 48109. Email: {\tt hsinhao@umich.edu}. This work is supported by NSF grants CCF-1217338 and CNS-1318294. This work was done while visiting MADALGO at Aarhus University, supported by Danish National Research Foundation grant DNRF84.}\\University of Michigan, USA
}

\date{}

%\pagenumbering{roman}
\maketitle 

\thispagestyle{empty}
\begin{abstract}
We study the problem of computing the minimum cut in a weighted distributed message-passing networks (the \congest model). Let $\lambda$ be the minimum cut, $n$ be the number of nodes (processors) in the network, and $D$ be the network diameter. Our algorithm can compute $\lambda$ exactly in $O((\sqrt{n} \log^{*} n +D)\lambda^4 \log^2 n)$ time. To the best of our knowledge, this is the first paper that explicitly studies computing the exact minimum cut in the distributed setting. Previously, non-trivial sublinear time algorithms for this problem are known only for unweighted graphs when $\lambda\leq 3$ due to Pritchard and Thurimella's $O(D)$-time and $O(D+n^{1/2}\log^* n)$-time algorithms for computing $2$-edge-connected and $3$-edge-connected components  [ACM Transactions on Algorithms 2011].

By using the edge sampling technique of Karger [STOC 1994], we can convert this algorithm into a $(1+\epsilon)$-approximation  $O((\sqrt{n}\log^{*} n+D)\epsilon^{-5}\log^3 n )$-time algorithm for any $\epsilon>0$. This improves over the previous $(2+\epsilon)$-approximation $O((\sqrt{n}\log^{*} n+D)\epsilon^{-5}\log^2 n \log \log n)$-time algorithm  and $O(\epsilon^{-1})$-approximation $O(D + n^{\frac{1}{2}+\epsilon}\poly\log n)$-time algorithm of Ghaffari and Kuhn [DISC 2013]. Due to the lower bound of $\Omega(D+n^{1/2}/\log n)$ by Das Sarma et al. [SICOMP 2013] which holds for any approximation algorithm, this running time is {\em tight} up to a $\poly\log n$ factor.

To get the stated running time, we developed an approximation algorithm which combines the ideas of Thorup's algorithm [Combinatorica 2007] and Matula's contraction algorithm [SODA 1993]. It saves an $\epsilon^{-9} \log^{7} n$ factor as compared to applying Thorup's tree packing theorem directly. Then, we combine Kutten and Peleg's tree partitioning algorithm [J. Algorithms 1998] and Karger's dynamic programming [JACM 2000] to achieve an efficient distributed algorithm that finds the minimum cut when we are given a spanning tree that crosses the minimum cut exactly once. 
\end{abstract}

%\vspace{2cm}
%\begin{center}
%{\bf Regular Submission}\\
%\vspace{.5cm}
%{\bf Eligible for the best student paper award.}
%\end{center}

%\begin{center}
%{\bf Brief Announcement Submission}
%\end{center}

%\newpage
%\thispagestyle{empty}
%\tableofcontents

%\newpage
%\listoftheorems
\end{titlepage}

\section{Introduction}

Minimum cut is an important measure of networks. It determines, e.g., the network vulnerability and the limits to the speed at which information can be transmitted.
%
%Given a weighted undirected graph $G=(V,E)$, a cut $C=(S,V \setminus S)$ where $\emptyset \subsetneq S \subsetneq V$, is a partition of vertices into two non-empty sets. The weight of a cut, $w(C)$, is defined to be the sum of the edge weights crossing $C$. The minimum cut problem is to find a cut with the minimum weight. The exact version of the problem as well as the approximate version have been studied for many years \cite{Karger93, KS93, Karger94, Karger94b, NI92, Matula93, Gabow95, SW97, Karger00} in the context of centralized model of computation, resulting in nearly linear time algorithms \cite{Karger00, Matula93, Karger94, Karger94b}. 
%
While this problem has been well-studied in the centralized setting (e.g. \cite{Karger93,KS93,Karger94,Karger94b,NI92,Matula93,Gabow95,SW97,Karger00}),
% resulting in nearly linear time algorithms \cite{Karger00, Matula93, Karger94, Karger94b},
 very little is known in the distributed setting, especially in the relevant context where communication links are constrained by a small {\em bandwidth} -- the so-called \congest model (cf. \Cref{sec:prelim}). 

Consider, for example, a simple variation of this problem, called {\em $\lambda$-edge-connectivity}: given an {\em unweighted} undirected graph $G$ and a {\em constant} $\lambda$, we want to determine whether $G$ is $\lambda$-edge-connected or not. In the centralized setting, this problem can be solved in $O(m+ n\lambda^2 \log n)$ time \cite{Gabow95},
%\danupon{Selfnote: Real running time: $O(m+\lambda n\log(n/\lambda))$}
thus near-linear time when $\lambda$ is a constant. (Throughout, $n$, $m$, and $D$ denotes the number of nodes, number of edges, and the network diameter, respectively.) In the distributed setting, however, non-trivial solutions exist only when $\lambda\leq 3$; this is due to algorithms of Pritchard and Thurimella \cite{PritchardT11} which can compute $2$-edge-connected and $3$-edge-connected components in $O(D)$ and $O(D+n^{1/2}\log^* n)$ time, respectively, with high probability\footnote{We say that an event holds {\em with high probability} (w.h.p.) if it holds with probability at least $1-1/n^c$, where $c$ is an arbitrarily large constant.}. This implies that the  $\lambda$-edge-connectivity problem can be solved in $O(D)$ time when $\lambda=2$ and $O(D+n^{1/2}\log^* n)$ time when $\lambda=3$. 

%where $D$ is the network diameter and $n$ is the number of nodes in the network. 

%\danupon{To think: Should we mention Thurimella's sparse certificate algorithm?}

For the general version where input graphs could be weighted, the problem can be solved in near-linear time \cite{Karger00,Matula93,Karger94,Karger94b} in the centralized setting. In the distributed setting, the first non-trivial upper bounds are due to Ghaffari and Kuhn \cite{GhaffariK13}, who presented  $(2+\epsilon)$-approximation $O((\sqrt{n}\log^{*} n+D)\epsilon^{-5}\log^2 n \log \log n)$-time and $O(\epsilon^{-1})$-approximation $O(D + n^{\frac{1}{2}+\epsilon}\poly\log n)$-time algorithms. 
These upper bounds are complemented by a lower bound of $\Omega(D+n^{1/2}/\log n)$ for any approximation algorithm which was earlier proved by Das~Sarma~et~al.~\cite{DasSarma12} for the weighted case and later extended by \cite{GhaffariK13} to the unweighted case.  
This means that the running times of the algorithms in \cite{GhaffariK13} are tight up to a $\polylog n$ factor. Yet, it is still open whether we can achieve an approximation factor less than two in the same running time, or in fact, in any sublinear (i.e. $O(D+o(n))$) time.

\paragraph{Results.} In this paper, we present improved distributed algorithms for computing the minimum cut both exactly and approximately. Our exact deterministic algorithm for finding the minimum cut takes $O((\sqrt{n} \log^{*} n +D)\lambda^4 \log^2 n)$ time, where $\lambda$ is the value of the minimum cut.
Our approximation algorithm finds a $(1+\epsilon)$-approximate minimum cut in $O((D+\sqrt{n}\log^{*} n)\epsilon^{-5}\log^{3}n)$ time with high probability. (If we only want to compute the $(1+\epsilon)$-approximate {\em value} of the minimum cut, then the running time can be slightly reduced to $O((\sqrt{n}\log^{*} n+D)\epsilon^{-5}\log^2 n \log\log n)$.)
%
%For any $\epsilon>0$, our approximation algorithm computes a $(1+\epsilon)$-approximate {\em value} of the minimum cut in $O((\sqrt{n}\log^{*} n+D)\epsilon^{-5}\log^2 n \log\log n)$ time and {\em finds}\danupon{Need to define ``find''} such cut in .
%
As noted earlier, prior to this paper there was no sublinear-time exact algorithm even when $\lambda$ is a constant greater than three, nor sublinear-time algorithm with approximation ratio less than two. 
%s
\Cref{table:result summary} summarizes the results.

\paragraph{Techniques.} The starting point of our algorithm is Thorup's tree packing theorem \cite[Theorem~9]{Thorup07}, which shows that if we generate $\Theta(\lambda^7 \log^3 n)$ trees $T_1, T_2, \ldots$, where tree $T_i$ is the minimum spanning tree with respect to the loads induced by  $\{T_1, \ldots, T_{i-1}\}$, then one of these trees will contain exactly one edge in the minimum cut (see \Cref{sec:mincut algo} for the definition of load). 
%
%(Due to the space limit, we refer the readers to \cite[Theorem~9]{Thorup07} for the full statement.) 
%
Since we can use the $O(\sqrt{n}\log^{*} n+D)$-time algorithm of Kutten and Peleg \cite{KuttenP98} to compute the minimum spanning tree (MST), the problem of finding a minimum cut is reduced to finding the minimum cut that {\em $1$-respects a tree}; i.e., finding which edge in a given spanning tree defines a smallest cut (see the formal definition in \Cref{sec:one respect tree}). 
%We present a distributed algorithm that computes the values of that cuts that $1$-respect the tree in $O(D+\sqrt{n} \log^{*}n)$ rounds. 
Solving this problem in $O(D+\sqrt{n} \log^{*}n)$ time is the first key technical contribution of this paper. 
We do this by using a simple observation of Karger \cite{Karger00} which reduces the problem to computing the sum of degree and the number of edges contained in a subtree rooted at each node. 
%Karger \cite{Karger00} use this observation to develop a dynamic programming to solve this problem in the centralized setting. 
We use this observation along with Garay, Kutten and Peleg's {\em tree partitioning} \cite{KuttenP98,GarayKP98} to quickly compute these quantities. This requires several (elementary) steps, which we will discuss in more detail in \Cref{sec:one respect tree}.

The above result together with Thorup's tree packing theorem immediately imply that we can find a minimum cut exactly in $O((D+\sqrt{n} \log^{*}n)\lambda^{7} \log^3 n)$ time. By using Karger's random sampling result \cite{Karger94b} to bring $\lambda$ down to $O(\log n / \epsilon^2)$, we can find an $(1+\epsilon)$-approximate minimum cut in $O((D+\sqrt{n} \log^{*}n)\epsilon^{-14} \log^{10} n)$ time. 
These time bounds unfortunately depend on large factors of $\lambda$, $\log n$ and $1/\epsilon$, which make their practicality dubious. Our second key technical contribution is a new algorithm which significantly reduces these factors by combining Thorup's greedy tree packing approach with Matula's contraction algorithm \cite{Matula93}. In Matula's $(2+\epsilon)$-approximation algorithm for the minimum cut problem, he partitioned the graph into {\em components} according to the {\it spanning forest decomposition} by Nagamochi and Ibaraki \cite{NI92}. He showed that either a component induces a $(2+\epsilon)$-approximate minimum cut, or the minimum cut does not intersect with the components. In the latter case, it is safe to contract the components. Our algorithm used a similar approach, but we partitioned the graph according to Thorup's greedy tree packing approach instead of the spanning forest decomposition. We will show that either (i) a component induces a $(1+\epsilon)$-approximate minimum cut, (ii) the minimum cut does not intersect with the components, or (iii) the minimum cut 1-respect a tree in the tree packing.  This algorithm and analysis will be discussed in detail in \Cref{sec:mincut algo}.
We note that our algorithm can also be implemented in the centralized setting in $O(m + n\epsilon^{-7} \log^3 n )$ time. It is slightly worse than the current best $O(m + n\epsilon^{-3} \log^3 n )$ by Karger \cite{Karger94}. 
%\danupon{I removed the last ``advertisement'' sentence.}
%Our algorithm offers an alternative way, which might be arguably simpler, to approximate the minimum cut in the centralized setting in linear time for graphs that are not so sparse.

%\danupon{TO DISCUSSED: Is there some nice thing about our algorithm compared to Karger's, e.g., it's simpler? If not, there seems to be no point talking about this. Also, I removed the discussions on the sparsification technique since I don't see why they will be of an interest to the readers.}

%For centralized settings, we can also implement our algorithm in $O(m + \epsilon^{-7} n \log^3 n )$ time with the sparsification techniques from \cite{Karger94b, NI92}. Note that the current best running time for the $(1+\epsilon)$-approximation algorithm is $O(m + \epsilon^{-3} n \log^3 n )$ by Karger \cite{Karger94}. Our algorithms are determinstic except for the sparsification part.
 
\begin{table}
\centering
\begin{tabular}{|c|c|c|}
\hline
{\bf Reference}& {\bf Time} & {\bf Approximation}\\
\hline
Pritchard\&Thurimella \cite{PritchardT11} & $O(D)$ for $\lambda\leq 2$ & exact\\
Pritchard\&Thurimella \cite{PritchardT11} & $O(\sqrt{n}\log^*n+D)$ for $\lambda\leq 3$ & exact\\
This paper & $O((\sqrt{n} \log^{*} n +D)\lambda^4 \log^2 n)$ & exact \\
\hline
Das Sarma et al. \cite{DasSarma12} & $\Omega(\frac{\sqrt{n}}{\log n} + D)$ & any\\
Ghaffari\&Kuhn \cite{GhaffariK13} & $O((\sqrt{n}\log^{*} n+D)\epsilon^{-5}\log^2 n \log \log n)$ & $2+\epsilon$\\
This paper & $O((\sqrt{n}\log^{*} n + D)\epsilon^{-5}\log^{3}n)$ & $1+\epsilon$\\
\hline
\end{tabular}
\caption{Summary of Results}\label{table:result summary}
\end{table}

\section{Preliminaries}\label{sec:prelim}

%\paragraph{Communication Model.} We use a standard message passing network model called CONGEST \cite{PelegBook}. Throughout the paper, we let $n$ be the number of nodes and $D$ be the diameter of the network. Every node is assumed to have a unique ID, and initially knows the weights of edges incident to it. 
%%
%The execution in this network proceeds in synchronous rounds and in each round, each node can send a message of size $O(\log n)$ bits to each of its neighbors.  The goal of the problem is find the minimum or approximately minimum cut $X$. (Every node outputs whether it is in $X$ in the end of the process.) The time complexity is the number of rounds needed to compute this. (For more detail, see \cite{GhaffariK13}.)
%

\paragraph{Communication Model.} We use a standard message passing network model called \congest \cite{PelegBook}. A network of processors is  modeled by an undirected unweighted  $n$-node graph $G$, where nodes model the processors and  edges model {\em $O(\log n)$-bandwidth} links between the processors. 
The processors  (henceforth, nodes) are assumed to have unique IDs in the range of $\{1, \ldots, \poly(n)\}$ and infinite computational power. We denote the ID of node $v$ by $\id(v)$. Each node has limited topological knowledge; in particular, it only knows the IDs of its neighbors and knows {\em no} other topological information (e.g., whether its neighbors are linked by an edge or not).  Additionally, we let $w:E(G)\rightarrow \{1, 2, \ldots, \poly(n)\}$ be the edge weight assignment. The weight $w(uv)$ of each edge $uv$ is known only to $u$ and $v$. As commonly done in the literature (e.g., \cite{GhaffariK13,KhanP08,LotkerPR09,KuttenP98,GarayKP98,Nanongkai13-ShortestPaths}), we will assume that the maximum weight is $\poly(n)$ so that each edge weight can be sent through an edge (link) in one round.

%\footnote{We note that, besides needing this assumption to ensure that weights can be encoded by $O(\log n)$ bits, we also need it in the analysis of the running time of our algorithms: most running times of our algorithms are logarithmic of the largest edge weight. This is in the same spirit as, e.g., \cite{LotkerPR09,GhaffariK13,KhanP08}.} 

There are several measures to analyze the performance of distributed algorithms. One fundamental measure is the {\em running time} defined as the worst-case number of {\em rounds} of distributed communication. At the beginning of each round, all nodes wake up simultaneously. Each node $u$ then sends an arbitrary message of $B=\log n$ bits through each edge $uv$, and the message will arrive at node $v$ at the end of the round. (See \cite{PelegBook} for detail.)
%
%We assume that nodes always know the number of the current round. To simplify notations, we will name nodes using their IDs, i.e. we let $V(G)\subseteq \{1, \ldots, \poly(n)\}$. Thus, we use $u\in V(G)$ to represent a node, as well as its ID. 
%
The running time is analyzed in terms of number of nodes and the diameter of the network, denoted by $n$ and $D$ respectively. Since we can compute $n$ and $2$-approximate $D$ in $O(D)$ time, we will assume that every node knows $n$ and the $2$-approximate value of $D$. 

%\danupon{MAKE FOOTNOTE: ``We say that an event holds {\em with high probability} (w.h.p.) if it holds with probability at least $1-1/n^c$, where $c$ is an arbitrarily large constant.''}

\paragraph{Minimum Cut Problem.}
Given a weighted undirected graph $G=(V,E)$, a {\em cut} $C=(S,V \setminus S)$ where $\emptyset \subsetneq S \subsetneq V$, is a partition of vertices into two non-empty sets. The {\em weight} of a cut, denoted by $w(C)$, is defined to be the sum of the edge weights crossing $C$; i.e., $w(C)=\sum_{u\in S, v\notin S} w(uv)$. Throughout the paper, we use $\lambda$ to denote the weight of the minimum cut. 
%When the graph $G$ is unweighted (i.e. $w(uv)=1$ for every edge $uv$), the minimum cut of $G$ is usually called 
%
A $(1+\epsilon)$-approximate minimum cut is a cut $C$ whose weight $w(C)$ is such that $\lambda \leq w(C)\leq (1+\epsilon)\lambda.$ 
The (approximate) minimum cut problem is to find a cut $C=(S, V\setminus S)$ with the minimum or approximately minimum weight. In the distributed setting, this means that nodes in $S$ should output $1$ while other nodes output $0$. 

%The exact version of the problem as well as the approximate version have been studied for many years \cite{Karger93, KS93, Karger94, Karger94b, NI92, Matula93, Gabow95, SW97, Karger00} in the context of centralized model of computation, resulting in nearly linear time algorithms \cite{Karger00, Matula93, Karger94, Karger94b}. 

%Let $G$ be a connected graph with integer weights from $\{1,\ldots,W \}$, where $W = n^{\Theta(1)}$. We will treat $G$ as a multigraph with uniform edge weights.

\paragraph{Graph-Theoretic Notations.}
For $G=(V,E)$, we define $V(G) = V$ and $E(G) = E$. When we analyze the correctness of our algorithms, we will always treat $G$ as an {\em unweighted multi-graph} by replacing each edge $e$ with $w(e)$ by $w(e)$ copies of $e$ with weight one. We note that this assumption is used only in the analysis, and in particular we still allow only $O(\log n)$ bits to be communicated through edge $e$ in each round of the algorithm (regardless of $w(e)$). 
For any cut $C = (S, V \setminus S)$, let $E(C)$ denote the set of edges crossing between $S$ and $V \setminus S$ in the multi-graph; thus $w(C)=|E(C)|$. Given an edge set $F \subseteq E$, we use $G/F$ to denote the graph obtained by contracting every edge in $F$. Given a partition $\mathcal{P}$ of nodes in $G$, we use $G/\mathcal{P}$ to denote the graph obtained by contracting each set in $\mathcal{P}$ into one node. Note that $E(G/\mathcal{P})$ may be viewed as the set of edges in $G$ that cross between different sets in $\mathcal{P}$. For any $U \subseteq V$, we use $G \mid U$ to denote the subgraph of $G$ induced by nodes in $U$.
For convenience, we use the subscript $*_{H}$ to denote the quantity $*$ of $H$; for example, $\lambda_{H}$ denote the value of the minimum cut of the graph $H$. A quantity without a subscript refer to the quantity of $G$, the input graph.

%\danupon{Should we move tree packing to \Cref{sec:mincut algo}? (Hsin-Hao: moved)}

%Given a cut $C$ and tree $T$, we say a $C$ $k$-{\it respects} $T$ if $T$ contains exactly $k$ edges crossing $C$.

\section{Distributed Algorithm for Finding a Cut that 1-Respects a Tree}\label{sec:one respect tree}

In this section, we solve the following problem: Given a spanning tree $T$ on a network $G$ rooted at some node $r$, we want to find an edge in $T$ such that when we cut it, the cut defined by edges connecting the two connected component of $T$ is smallest. To be precise, for any node $v$, define $v^\downarrow$  to be the set of nodes that are descendants of $v$ in $T$, including $v$. Let $C_v=(v^\downarrow, V\setminus v^\downarrow)$. 
%
%
%\begin{definition}[$v^\downarrow$ and $v^\uparrow$]
%For any node $v$ define $v^\downarrow$ to be the set of nodes that are descendants of $v$ in $T$, including $v$. Similarly, define $v^\uparrow$ to be the set of nodes that are ancestors of $v$ in $T$, including $v$. 
%\end{definition}
%
The problem is then to compute $c^* = \min_{v\in V(G)} \cut(C_v)$.
%
%For convenience, throughout this section we will abuse the notation and let $\cut(v^\downarrow)= \cut(C_v).$
%
%\begin{align}
%c^* = \min_{v\in V(G)} \cut(v^\downarrow).\label{eq:min one tree cut}
%\end{align}
%
%consider any edge $(u, v)$. Let $T_1$ and $T_2$ be the two connected components resulting from cutting $(u, v)$ in $T$. Define $cut_T(u, v)$ be the number of edges between $T_1$ and $T_2$ in $G$. Then, we want to compute $\min_{(u, v)\in E(T)} cut_T(u, v)$. 
%
The main result of this section is the following. 

\begin{theorem}
There is an $O(D+n^{1/2}\log^*n)$-time distributed algorithm that can compute $c^*$ as well as find a node $v$ such that  $c^* = \cut(C_v)$. 
\end{theorem}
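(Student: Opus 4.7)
The plan is to apply Karger's dynamic programming idea \cite{Karger00} distributively. For any spanning tree $T$ of $G$ rooted at $r$ and any node $v$, Karger's identity gives
\[
\cut(C_v) = D_v - 2 I_v,
\]
where $D_v := \sum_{u \in v^\downarrow} \deg_G(u)$ and $I_v := |\{(x,y) \in E(G) : x, y \in v^\downarrow\}|$. It therefore suffices to compute $D_v$ and $I_v$ at every node within $O(D + \sqrt{n}\log^{*} n)$ rounds; a single global min-aggregation over a BFS tree of $G$ then recovers $c^*$ and a witness $v^*$. Equivalently, using Karger's $+1/+1/-2$ charging (place $+1$ at each endpoint of every $G$-edge and $-2$ at its $T$-LCA), the subtree sum at $v$ of the per-node quantity $\deg_G(u) - 2 \ell_u$ already equals $\cut(C_v)$, where $\ell_u := |\{(x,y) \in E(G) : \mathrm{LCA}_T(x,y) = u\}|$. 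So it is equivalent to compute $D_v$ together with $\sum_{u \in v^\downarrow} \ell_u$.

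Both are subtree aggregations on $T$, but naive bottom-up along $T$ can cost $\Theta(n)$ rounds when $T$ is deep. To bypass this, I would invoke Garay, Kutten, and Peleg's tree partitioning \cite{GarayKP98,KuttenP98}: in $O(D + \sqrt{n}\log^{*} n)$ rounds it decomposes $T$ into $O(\sqrt{n})$ connected subtree fragments of size $O(\sqrt{n})$ each, with a distinguished root $r_F$ per fragment $F$. The fragment roots form a skeleton $S$---a tree on only $O(\sqrt{n})$ nodes---and, combined with a BFS tree $B$ of $G$ (built in $O(D)$ rounds), any $O(\sqrt{n})$-sized batch of information can be gathered at, or broadcast from, any node in $O(D + \sqrt{n})$ rounds by pipelining. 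Given any node labels $\{x_u\}_{u \in V}$, the subtree sums $X_v := \sum_{u \in v^\downarrow} x_u$ can then be computed by a three-pass template: (i) each fragment computes its in-fragment subtree sums locally in $O(\sqrt{n})$ rounds, (ii) the fragment-root partial sums are aggregated up $S$ via $B$ so every $r_F$ learns $X_{r_F}$, and (iii) each fragment does a top-down pass with the external child-fragment values to finalize $X_v$ at every internal node.

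Applying this template with $x_u = \deg_G(u)$ immediately yields all $D_v$; the main obstacle is producing the $\ell_u$ labels so the same template applies with $x_u = \ell_u$. My plan is to broadcast $S$ together with the $T$-parent of every fragment root, so each node locally knows the sequence of ancestor fragments on its $T$-path and the entry-point $T$-node into each of them; then each node informs its $G$-neighbors of its fragment ID in a single round. For a $G$-edge $(x,y)$: if $F(x)=F(y)$, the two endpoints exchange their $O(\sqrt{n})$-length intra-fragment paths over the direct $G$-edge in $O(\sqrt{n})$ rounds and locally pinpoint $\mathrm{LCA}_T(x,y)$; otherwise they read off the entry points $q_x,q_y$ of $x$ and $y$ into the skeleton-LCA fragment $F^*$ from the broadcast data alone. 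Each node then aggregates its own LCA contributions by the target fragment $F^*$ and by the resulting $(q_x,q_y)$ pair, and these aggregated counts are routed through $B$ and $S$ to the responsible fragments; each fragment runs a local DP on its $O(\sqrt{n})$ nodes to assemble $\ell_u$ for each of its vertices. Feeding the resulting $\ell_u$'s back into the three-pass template produces the $I_v$ values, and combining with $D_v$ completes the computation within the stated $O(D + \sqrt{n}\log^{*} n)$ budget.
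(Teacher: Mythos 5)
You follow the paper's overall route---Karger's identity $\cut(C_v)=\delta^\downarrow(v)-2\rho^\downarrow(v)$, the Kutten--Peleg decomposition of $T$ into $O(\sqrt{n})$ fragments of diameter $O(\sqrt{n})$, fragment-local plus skeleton-level aggregation for the subtree degree sums, and a per-edge LCA computation for the $\rho$-part---and the same-fragment LCA case and the three-pass subtree-sum template are sound. The gap is in the cross-fragment case where the LCA $z$ of a $G$-edge $(x,y)$ lies in a third fragment $F^*$ (the LCA of $F(x)$ and $F(y)$ in the fragment tree). Your endpoints determine only the entry points $(q_x,q_y)$ into $F^*$, and you then route counters indexed by $(F^*,q_x,q_y)$ to $F^*$ for a local DP. But the number of distinct such triples with nonzero count can be $\Theta(n)$: take a root fragment that is a path $p_1,\dots,p_{\sqrt{n}}$ with one child fragment hanging off each $p_i$, and a cross edge between every pair of child fragments; then every pair $(p_i,p_j)$ is a distinct counter, all destined for the same fragment. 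The pipelined upcast/downcast bound you are implicitly invoking is $O(N+D)$ for $N$ distinct items, which with $N=\Theta(n)$ blows the budget; nothing in the write-up collapses these counters to $O(\sqrt{n})$ before the global routing step.

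The paper closes exactly this hole with one additional structural idea that your proposal omits: whenever $z$ lies outside both endpoint fragments, $z$ is a \emph{merging node} (a node with two distinct children whose subtrees each contain a whole fragment), and there are only $O(\sqrt{n})$ of these. The paper builds the $O(\sqrt{n})$-size tree $T'_F$ on fragment roots and merging nodes, broadcasts it in $O(\sqrt{n}+D)$ rounds, and lets $x$ and $y$ compute $z$ \emph{themselves} by exchanging their $O(\sqrt{n})$-length ancestor lists in $T'_F$ over the edge $(x,y)$ itself; afterwards only $O(\sqrt{n})$ global counters (one per merging node) are summed over the BFS tree, and the remaining case ($z$ in an endpoint's own fragment) is aggregated purely inside that fragment using the precomputed sets $F(u)$ for $u\in A(v)$. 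To repair your argument you need either this device or an explicit proof that your $(F^*,q_x,q_y)$ counters can be reduced to $O(\sqrt{n})$ identifiable targets before any global routing.
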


%While we state our result as computing the {\em value} of $c^*$, our algorithm will in fact find the cut  
%
%%We note that our algorithm in facts compute $\cut(v^\downarrow)$ for all $v$ ($v$ knows )
%
In fact, at the end of our algorithm every node $v$ knows $\cut(C_v)$. Our algorithm is inspired by the following observation used in Karger's dynamic programming \cite{Karger00}. For any node $v$, let $\delta(v)$ be the weighted degree of $v$, i.e. $\delta(v)=\sum_{u\in V(G)} w(u, v)$. Let $\rho(v)$ denote the total weight of edges whose end-points' least common ancestor in $T$ is $v$. Let $\delta^\downarrow(v)=\sum_{u\in v^\downarrow} \delta(u)$ and $\rho^\downarrow(v)=\sum_{u\in v^\downarrow} \rho(u)$. 

%It relies on the following simple lemma (Lemma 5.9 in \cite{Karger00}). 
%
%, which says that when we count the degree of all nodes, we will get the value of $\cut(v^\downarrow)$ 
%
%\begin{definition}[$\delta$ and $\rho$]
%For any node $v$, let $\delta(v)$ be the weighted degree of $v$, i.e. $\delta(v)=\sum_{u\in V(G)} w(u, v)$. Let $\rho(v)$ denote the total weight of edges whose endpoints' least common ancestor is $v$. Let $\delta^\downarrow(v)=\sum_{u\in v^\downarrow} \delta(u)$ and $\rho^\downarrow(v)=\sum_{u\in v^\downarrow} \rho(u)$. 
%\end{definition}

\begin{lemma}[Karger \cite{Karger00} (Lemma 5.9)]\label{thm:Karger}
$\cut(C_v) = \delta^\downarrow(v)-2\rho^\downarrow(v)$.
\end{lemma}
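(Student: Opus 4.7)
The plan is to prove the identity by a standard double-counting argument on the contributions of individual edges to the quantities $\delta^\downarrow(v)$ and $\rho^\downarrow(v)$.

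First, I would expand $\delta^\downarrow(v) = \sum_{u \in v^\downarrow} \delta(u)$ and classify each edge $e = xy$ of $G$ by how many of its endpoints lie in $v^\downarrow$. An edge with both endpoints outside $v^\downarrow$ contributes $0$; an edge with exactly one endpoint in $v^\downarrow$ contributes $w(e)$ (from the single endpoint that is a descendant of $v$); and an edge with both endpoints in $v^\downarrow$ contributes $2w(e)$. Summing over all edges, this yields
\[
\delta^\downarrow(v) \;=\; w(C_v) \;+\; 2\sum_{\substack{e=xy\\ x,y\in v^\downarrow}} w(e),
\]
since the edges crossing $C_v$ are exactly those with one endpoint in $v^\downarrow$ and one outside.

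Next, I would show that $\rho^\downarrow(v)$ is precisely the total weight of edges with both endpoints inside $v^\downarrow$. The key observation is that every edge $e = xy$ has a unique least common ancestor $\mathrm{lca}(x,y)$ in $T$, and the definition of $\rho$ groups each edge by this LCA: $\rho(u) = \sum_{e=xy:\ \mathrm{lca}(x,y)=u} w(e)$. It then suffices to check that $\mathrm{lca}(x,y) \in v^\downarrow$ if and only if both $x,y \in v^\downarrow$. The ``only if'' direction follows because $x,y \in \mathrm{lca}(x,y)^\downarrow \subseteq v^\downarrow$ whenever the LCA itself is a descendant of $v$. The ``if'' direction follows because $v^\downarrow$ is closed under taking LCAs in a rooted tree: if $x,y \in v^\downarrow$, the path from $x$ to $y$ in $T$ passes through $\mathrm{lca}(x,y)$, which must therefore also lie in $v^\downarrow$. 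Hence
\[
\rho^\downarrow(v) \;=\; \sum_{u\in v^\downarrow}\ \sum_{e:\ \mathrm{lca}(e)=u} w(e) \;=\; \sum_{\substack{e=xy\\ x,y\in v^\downarrow}} w(e).
\]

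Combining the two displayed equations gives $\delta^\downarrow(v) = w(C_v) + 2\rho^\downarrow(v)$, and rearranging yields the claim. The main (very mild) obstacle is just making the LCA-closure observation cleanly; the rest is a routine incidence count. Since the argument is self-contained and does not rely on anything beyond the definitions of $v^\downarrow$, $\delta$, and $\rho$, no further machinery is needed.
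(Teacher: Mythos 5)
Your argument is correct and complete: the edge-by-edge classification of contributions to $\delta^\downarrow(v)$ and the observation that $\mathrm{lca}(x,y)\in v^\downarrow$ iff both endpoints lie in $v^\downarrow$ together give exactly $\delta^\downarrow(v)=w(C_v)+2\rho^\downarrow(v)$. The paper itself offers no proof and simply cites Karger's Lemma 5.9, whose proof is this same double-counting argument, so your approach matches the source.
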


Our algorithm will make sure that every node $v$ knows $\delta^\downarrow(v)$ and $\rho^\downarrow(v)$. By \Cref{thm:Karger}, this will be sufficient for every node $v$ to compute $w(C_v)$. 
%$\cut(v^\downarrow)$ as in \Cref{eq:min one tree cut}. 
The algorithm is divided in several steps, as follows.

\newcommand{\figscale}{0.3\textwidth}
\begin{figure}
\centering
\begin{subfigure}{\figscale}
\centering
\includegraphics[page=1, clip=true, trim=7cm 9.9cm 8cm 1.1cm, width=\textwidth]{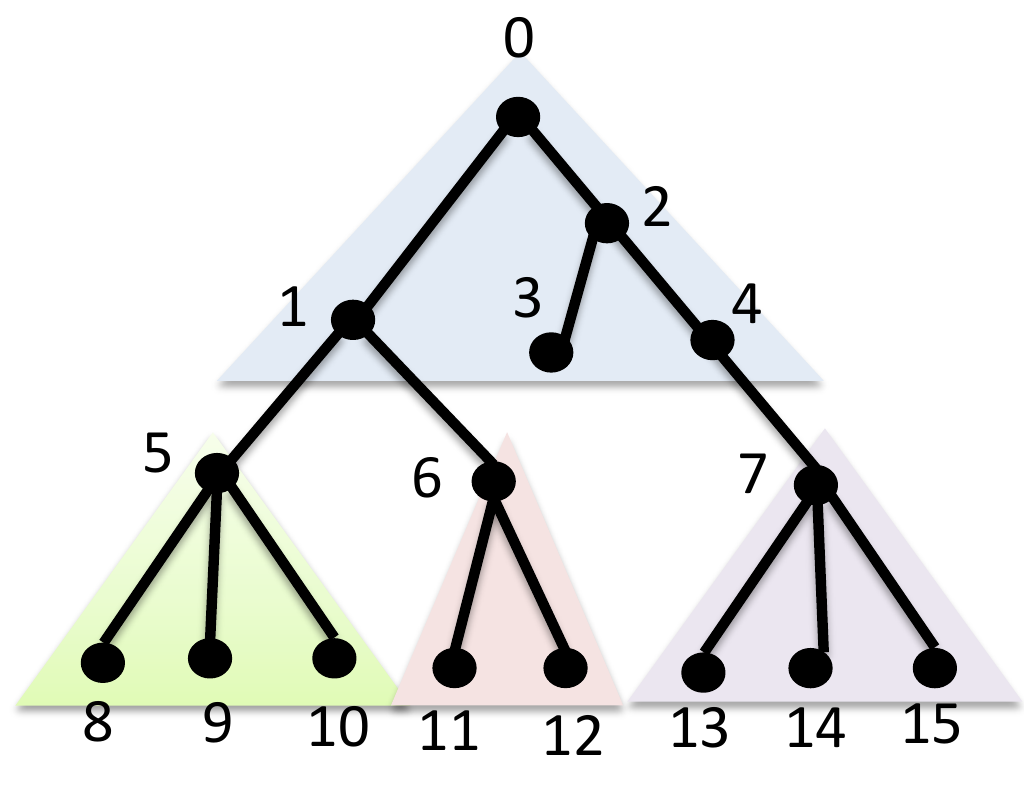}
\caption{}\label{fig:one}
\end{subfigure}
\hspace{0.01\textwidth}
\begin{subfigure}{\figscale}
\centering
\includegraphics[page=2, clip=true, trim=7cm 9.9cm 8cm 1.1cm, width=\textwidth]{mincut.pdf}
\caption{}\label{fig:two}
\end{subfigure}
\hspace{0.01\textwidth}
\begin{subfigure}{\figscale}
\centering
\includegraphics[page=3, clip=true, trim=7cm 9.9cm 8cm 1.1cm, width=\textwidth]{mincut.pdf}
\caption{}\label{fig:three}
\end{subfigure}

\vspace{0.3cm}

\begin{subfigure}{\figscale}
\centering
\includegraphics[page=4, clip=true, trim=7cm 9.9cm 8cm 1.1cm, width=\textwidth]{mincut.pdf}
\caption{}\label{fig:four}
\end{subfigure}
\hspace{0.01\textwidth}
\begin{subfigure}{\figscale}
\centering
\includegraphics[page=5, clip=true, trim=7cm 9.9cm 8cm 1.1cm, width=\textwidth]{mincut.pdf}
\caption{}\label{fig:five}
\end{subfigure}
\hspace{0.01\textwidth}
\begin{subfigure}{\figscale}
\centering
\includegraphics[page=6, clip=true, trim=7cm 9.9cm 8cm 1.1cm, width=\textwidth]{mincut.pdf}
\caption{}\label{fig:six}
\end{subfigure}
\caption{}\label{fig:example}
%\\
%(\subref{fig:one}) Graph $G$ with tree $T$ (marked by bold (black) edges) rooted at node of ID $0$ and fragments (defined by triangular regions) obtained from \Cref{subsec:fragment}. \\%
%(\subref{fig:two}) Tree $T_F$ (nodes in this tree are the triangles) representing the structure between fragments. This tree is known to every node after we finish the steps in \Cref{subsec:inter fragment tree}.\\ 
%(\subref{fig:three}) Tree $T'_F$ representing fragments and their least common ancestors. This tree is known to every node after we finish the steps in\Cref{subsec:LCA of fragments}.}\label{fig:example}
\end{figure}

\paragraph{Step 1: Partition $T$ into Fragments and Compute ``Fragment Tree'' $T_F$.} %\label{subsec:fragment}
We use the algorithm of Kutten and Peleg \cite[Section 3.2]{KuttenP98} to partition nodes in tree $T$ into $O(\sqrt{n})$ subtrees, where each subtree has $O(\sqrt{n})$ diameter\footnote{To be precise, we compute a {\em $(\sqrt{n}+1, O(\sqrt{n}))$ spanning forest}. Also note that we in fact do not need this algorithm since we obtain $T$ by using Kutten and Peleg's MST algorithm, which already computes the $(\sqrt{n}+1, O(\sqrt{n}))$ spanning forest as a subroutine. See \cite{KuttenP98} for details.} 
(every node knows which edges incident to it are in the subtree containing it). This algorithm takes $O(n^{1/2}\log^*n+D)$ time. We call these subtrees {\em fragments} and denote them by $F_1, \ldots, F_k$, where $k=O(\sqrt{n})$. 
For any $i$, let $\id(F_i)=\min_{u\in F_i} \id(u)$ be the {\em ID of $F_i$}. We can assume that every node in $F_i$ knows $\id(F_i)$. This can be achieved in $O(\sqrt{n})$ time (the running time is independent of $D$) by a communication within each fragment. 
\Cref{fig:one} illustrates the tree $T$ (marked by black lines) with fragments (defined by triangular regions).

Let $T_F$ be a rooted tree obtained by contracting nodes in the same fragment into one node. This naturally defines the child-parent relationship between fragments (e.g. the fragments labeled (5), (6), and (7) in \Cref{fig:two} are children of the fragment labeled (0)). 
Let the {\em root} of any fragment $F_i$, denoted by $r_i$, be the node in $F_i$ that is nearest to the root $r$ in $T$. 
We now make every node know $T_F$: Every ``inter-fragment'' edge, i.e. every edge $(u, v)$ such that $u$ and $v$ are in different fragments, either node $u$ or $v$ broadcasts this edge and the IDs of fragments containing $u$ and $v$ to the whole network. This step takes $O(\sqrt{n}+D)$ time since there are $O(\sqrt{n})$ edges in $T$ that link between different fragments and so they can be collected by pipelining. 
Note that this process also makes every node know the roots of all fragments since, for every inter-fragment edge $(u, v)$, every node knows the child-parent relationship between two fragments that contain $u$ and $v$. 

\paragraph{Step 2: Compute Fragments in Subtrees of Ancestors.}  For any node $v$ let $F(v)$ be the set of fragments $F_i\subseteq v^\downarrow$. For any node $v$ in any fragment $F_i$, let $A(v)$ be the set of ancestors of $v$ in $T$ that are in $F_i$ or the {\em parent} fragment of $F_i$ (also let $A(v)$ contain $v$). (For example, \Cref{fig:three} shows $A(15)$.) We emphasize that $A(v)$ does not contain ancestors of $v$ in the fragments that are neither $F_i$ nor the parent of $F_i$. 
The goal of this step is to make every node $v$ knows (i) $A(v)$ and (ii) $F(u)$ for all $u\in A(v)$. 
%
%%The goal of this step is to make every node $v$ in any fragment $F_i$ know $(i)$ 
%%
%%
%%$F(u)$, for $u=v$ and every ancestor $u$ of $v$ in the $F_i$ and in the parent fragment of $F_i$. 
%
%
%%the following: , (ii) its ancestors in the $F_i$ and in parent fragment of $F_i$, and (iii) fragments that are contained in $v^\downarrow$. 

First, we make every node $v$ know $F(v)$: for every fragment $F_i$ we aggregate from the leaves to the root of $F_i$ (i.e. upcast) the list of child fragments of $F_i$. This takes $O(\sqrt{n}+D)$ time since there are $O(\sqrt{n})$ fragments to aggregate and each fragment has diameter $O(\sqrt{n})$. In this process every node $v$ receives a list of child fragments of $F_i$ that are contained in $v^\downarrow$. It can then use $T_F$ to compute fragments that are descendants of these child fragments, and thus compute {\em all} fragments contained in $v^\downarrow$. 

Next, we make every node $v$ in every fragment $F_i$ know $A(v)$: every node $u$ sends a message containing its ID down the tree $T$ until this message reaches the leaves of the child fragments of $F_i$. Since each fragment has diameter $O(\sqrt{n})$ and the total number of messages sent inside each fragment is $O(\sqrt{n})$, this process takes $O(\sqrt{n})$ time (the running time is independent of $D$). %\danupon{Should say that diameter and number of messages is $O(\sqrt{n})$ if possible.}
With the following minor modifications, we can also make every node $v$ know $F(u)$ (the fragment that $u$ is in) for all $u\in A(v)$: Initially every node $u$ sends a message $(u, F')$, for every $F'\in F(u)$, to its children. Every node $u$ that receives a message $(u', F')$ from its parent sends this message further to its children {\em if $F'\notin F(u)$}. (A message $(u', F')$ that a node $u$ sends to its children should be interpreted as ``$u'$ is the lowest ancestor of $u$ such that $F'\in F(u')$''.)

\paragraph{Step 3: Compute $\delta^\downarrow(v)$.} For every fragment $F_i$, we let $\delta(F_i)=\sum_{v\in F_i} \delta(v)$ (i.e. the sum of degree of nodes in $F_i$). 
For every node $v$ in every fragment $F_i$, we will compute $\delta^\downarrow(v)$ by separately computing (i) $\sum_{u\in F_i\cap v^\downarrow} \delta(u)$ and (ii)  $\sum_{F_j\in F(v)} \delta(F_j)$. 
The first quantity can be computed in $O(\sqrt{n})$ time (regardless of $D$) by computing the sum within $F_i$ (every node $v$ sends the sum $\sum_{u\in F_i\cap v^\downarrow} \delta(u)$ to its parent). 
To compute the second quantity, it suffices to make every node know $\delta(F_i)$ for all $i$ since every node $v$ already knows $F(v)$.  To do this, we make every root $r_i$ know $\delta(F_i)$ in $O(\sqrt{n})$ time by computing the sum of degree of nodes within each $F_i$. Then, we can make every node know $\delta(F_i)$ for all $i$ by letting $r_i$ broadcast $\delta(F_i)$ to the whole network.

%The first quantity can be already computed without communication since every node $v$ knows all fragments in $v^\downarrow$ as well as $\delta(F_i)$ for all fragments $F_i$. The second quantity can be computed by computing the sum within $F_i$ . 

%
%We can easily make every node knows $\delta(F_i)$ for every $F_i$ in $O(\sqrt{n})$ time: we make $r_i$ knows $\delta(F_i)$ in $O(\sqrt{n})$ time by computing the sum within each $F_i$ and then let $r_i$ broadcast $\delta(F_i)$ to the whole network. 

\paragraph{Step 4: Compute Merging Nodes and $T'_F$.}
We say that a node $v$ is a {\em merging node} if there are two distinct children $x$ and $y$ of $v$ such that both $x^\downarrow$ and $y^\downarrow$ contain some fragments. In other words, it is a point where two fragments ``merge''. For example, nodes $0$ and $1$ in \Cref{fig:one} are merging nodes since the subtree rooted at node $0$ (respectively node $1$) contains fragments $(5)$, $(6)$, and $(7)$ (respectively $(5)$ and $(6)$). 

%$v^\downarrow$ contains some distinct fragments $F_i$ and $F_j$ and there is no child $u$ of $v$ such that both $F_i$ and $F_j$ are contained in $u^\downarrow$ (e.g. nodes $0$ and $1$ in \Cref{fig:one,fig:four}). 
%
Let $T'_F$ be the following tree: Nodes in $T'_F$ are both roots of fragments ($r_i$'s) and merging nodes. The parent of each node $v$ in $T'_F$ is its lowest ancestor in $T$ that appears in $T'_F$ (see \Cref{fig:four} for an example). Note that every merging node has at least two children in $T'_F$. This shows that there are $O(\sqrt{n})$ merging nodes. 
%
%The goal of this step is to make every node know the list of its ancestors in $T$ that appear in $T'_F$.  
%
%are (i) merging nodes and (ii) roots of fragments. 
%
The goal of this step is to let every node know $T'_F$. 

First, note that every node $v$ can easily know whether it is a merging node or not in one round by checking, for each child $u$, whether $u^\downarrow$ contains any fragment (i.e. whether $F(u)=\emptyset$). The merging nodes then broadcast their IDs to the whole network. (This takes $O(\sqrt{n})$ time since there are $O(\sqrt{n})$ merging nodes.)
Note further that every node $v$ in $T'_F$ knows its parent in $T'_F$ because its parent in $T'_F$ is one of its ancestors in $A(v)$. 
%every node $v$ in fragment $F_i$ knows its ancestors in $F_i$ and parent fragment of $F_i$; its parent in $T'_F$ is one of these ancestors. 
%
So, we can make every node know $T'_F$ in $O(\sqrt{n}+D)$ rounds by letting  every node in $T'_F$ broadcast the edge between itself and its parent in $T'_F$ to the whole network.

\paragraph{Step 5: Compute $\rho^\downarrow(v)$.} We now count, for every node $v$, the number of edges whose  least common ancestors (LCA) of their end-nodes are $v$. 
For every edge $(x, y)$ in $G$, we claim that $x$ and $y$ can compute the LCA of $(x, y)$ by exchanging $O(\sqrt{n})$ messages through edge $(x, y)$. Let $z$ denote the LCA of $(x, y)$. Consider three cases (see \Cref{fig:five}). 

\smallskip\noindent
\underline{Case 1:}
First, consider when $x$ and $y$ are in the same fragment, say $F_i$. In this case we know that $z$ must be in $F_i$. Since $x$ and $y$ have the lists of their ancestors in $F_i$, they can find $z$ by exchanging these lists. There are $O(\sqrt{n})$ nodes in such list so this takes $O(\sqrt{n})$ time. 
In the next two cases we assume that $x$ and $y$ are in different fragments, say $F_i$ and $F_j$, respectively. 

\smallskip\noindent
\underline{Case 2:}
$z$ is {\em not} in $F_i$ and $F_j$. In this case, $z$ is a merging node such that $z^\downarrow$ contains $F_i$ and $F_j$. Since both $x$ and $y$ knows $T'_F$ and their ancestors in $T'_F$, they can find $z$ by exchanging the list of their ancestors in $T'_F$. There are $O(\sqrt{n})$ nodes in such list so this takes $O(\sqrt{n})$ time. 

\smallskip\noindent
\underline{Case 3:} $z$ is in $F_i$ (the case where $z$ is in $F_j$ can be handled in a similar way). In this case $z^\downarrow$ contains $F_j$. Since $x$ knows $F(x')$ for all its ancestors $x'$ in $F_i$, it can compute its lowest ancestor $x''$ such that $F(x'')$ contains $F_j$. Such ancestor is the LCA of $(x, y)$. 

\medskip\noindent
Now we compute $\rho^\downarrow(v)$ for every node $v$ by splitting edges $(x, y)$ whose LCA is $v$ into two types (see \Cref{fig:six}): (i) those that $x$ and $y$ are in different fragments from $v$, and (ii) the rest. 
For (i), note that $v$ must be a merging node. In this case one of $x$ and $y$ creates a message $\langle v\rangle$. We then count the number of messages of the form $\langle v\rangle$ for every merging node $v$ by computing the sum along the breadth-first search tree of $G$. This takes $O(\sqrt{n}+D)$ time since there are $O(\sqrt{n})$ merging nodes. 
For (ii), the node among $x$ and $y$ that is in the same fragment as $v$ creates and keeps a message $\langle v\rangle$. Now every node $v$ in every fragment $F_i$ counts the number of messages of the form $\langle v\rangle$ in $v^\downarrow\cap F_i$ by computing the sum through the tree $F_i$. Note that, to do this, every node $u$ has to send the number of messages of the form $\langle v\rangle$ to its parent, for all $v$ that is an ancestor of $u$ in the same fragment. There are $O(\sqrt{n})$ such ancestors, so we can compute the number of messages of the form $\langle v\rangle$ for every node $v$ {\em concurrently} in $O(\sqrt{n})$ time by pipelining. 

%This completes the step.
%This completes the computation of $\rho^\downarrow(v)$. 

%It follows that we can count the number of messages for every node $v$ in every fragment in $O(\sqrt{n})$.
\section{Minimum Cut Algorithms}\label{sec:mincut algo}

%An edge $e$ is a {\it bridge} if it does not exist a cycle in $G$ passing $e$ (or equivalently, deleting $e$ breaks $G$ into two connected components). Given two graph $A$ and $B$ with the same vertex set, $A+B$ is the multigraph obtained by including edges in $A$ and edges in $B$.

%\danupon{I added this paragraph.}
This section is organized as follows. In \Cref{sec:thorup}, we review properties of the greedy tree packing as analyzed by Thorup \cite{Thorup07}. We use these properties to develop a $(1+\epsilon)$-approximation algorithm in \Cref{sec:meta algo}. We show how to efficiently implement this algorithm in the distributed setting in \Cref{sec:distributed implementation} and in the sequential setting in \Cref{sec:sequential algo}.

\subsection{A Review of Thorup's Work on Tree Packings}\label{sec:thorup}

In this section, we review the duality connection between the tree packing and the partition of a graph as well as their properties from Thorup's work \cite{Thorup07}.

%\paragraph{Tree Packing.}
A {\it tree packing} $\mathcal{T}$ is a multiset of spanning trees. The {\it load} of an edge $e$ with respect to $\mathcal{T}$, denoted by $\mathcal{L}^{\mathcal{T}}(e)$, is the number of trees in $\mathcal{T}$ containing $e$. Define the {\it relative load} to be $\ell^{T}(e)=\mathcal{L}^{\mathcal{T}}(e) / |\mathcal{T}|$. A tree packing $\mathcal{T} = \{T_1, \ldots, T_k \}$ is {\it greedy} if each $T_i$ is a minimum spanning tree with respect to the loads induced by $\{T_1,\ldots, T_{i-1} \}$. 

Given a tree packing $\mathcal{T}$, define its {\it packing value} $\mathtt{pack\_val}(\mathcal{T}) = 1/\max_{e \in E} \ell^{\mathcal{T}}(e)$. The packing value can be viewed as the total weight of a fractional tree packing, where each tree has weight $1/\max_{e \in E} \mathcal{L}^{\mathcal{T}}(e)$. Thus, the sum of the weight over the trees is $|T| / \max_{e \in E}\mathcal{L}^{\mathcal{T}}(e)$, which is $\mathtt{pack\_val}(\mathcal{T})$. Given a partition $\mathcal{P}$, define its {\it partition value} $\mathtt{part\_val}(\mathcal{P}) = \frac{|E(G/\mathcal{P})|}{|\mathcal{P}| - 1}$. For any tree packing $\mathcal{T}$ and partition $\mathcal{P}$, we have the weak duality:
\begin{align*}
\mathtt{pack\_val(\mathcal{T})} &= \frac{1}{\max_{e \in E}\ell^{\mathcal{T}}(e)} \\
&\leq \frac{1}{\max_{e \in E(G/\mathcal{P})}\ell^{\mathcal{T}}(e)}\\
&\leq \frac{|E(G/\mathcal{P})|}{\sum_{e \in E(G/\mathcal{P})} \ell^{\mathcal{T}}(e)} && \tag*{(since max $\geq$ avg)}\\
&\leq \frac{|E(G/\mathcal{P})|}{|\mathcal{P}| - 1} && \tag*{(since each $T \in \mathcal{T}$ contains at least $|\mathcal{P}|-1$ edges crossing $\mathcal{P}$)} \\
&= \mathtt{part\_val(\mathcal{P})}
\end{align*}

The Nash-Williams-Tutte Theorem \cite{NW61, Tutte61} states that a graph $G$ contains  $\min_{\mathcal{P}} \lfloor \frac{|E(G/\mathcal{P})|}{|\mathcal{P}| - 1} \rfloor $ edge-disjoint spanning trees. Construct the graph $G'$ by duplicating $|\mathcal{P}|-1$ edges for every edge in $G$. It follows from the Nash-Williams-Tutte Theorem that $G'$ has exactly $|E(G / \mathcal{P})|$ edge-disjoint spanning trees. By assigning each spanning tree a weight of $1/(|\mathcal{P}| - 1)$, we get a tree packing in $G$ whose packing value equals to $\frac{|E(G/\mathcal{P})|}{|\mathcal{P}| - 1}$. Therefore,
$$\max_{\mathcal{T}}\; \mathtt{pack\_val}(\mathcal{T}) = \min_{\mathcal{P}}\; \mathtt{part\_val}(\mathcal{P}).$$
We will denote this value by $\Phi$. Let $\mathcal{T}^{*}$ and $\mathcal{P}^{*}$ denote a tree packing and a partition with $\mathtt{pack\_val}(\mathcal{T}^{*}) = \Phi$ and $\mathtt{part\_val}(\mathcal{P}^{*})=\Phi$. Karger \cite{Karger00} showed the following relationship between $\Phi$ and $\lambda$ (recall that $\lambda$ is  the value of the minimum cut).
%
%\danupon{Previous: ... that the value of the minimum cut $\lambda$ and $\Phi$ are related in the following.}

\begin{lemma}\label{lem:philambda} $\lambda/2 < \Phi \leq \lambda$ \end{lemma}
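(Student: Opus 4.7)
\textbf{Proof plan for \Cref{lem:philambda}.} The plan is to establish the two inequalities separately, using the equality $\max_{\mathcal{T}} \mathtt{pack\_val}(\mathcal{T}) = \min_{\mathcal{P}} \mathtt{part\_val}(\mathcal{P}) = \Phi$ that was just derived. For the upper bound $\Phi \leq \lambda$, I would exhibit a specific partition whose partition value equals $\lambda$. For the lower bound $\Phi > \lambda/2$, I would argue that \emph{every} partition has value strictly greater than $\lambda/2$, using the fact that each part of the partition induces a cut of weight at least $\lambda$.

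\emph{Upper bound.} Let $C^{*}=(S, V\setminus S)$ be a minimum cut and take the two-part partition $\mathcal{P}=\{S, V\setminus S\}$. In the multigraph convention used throughout the paper, $|E(G/\mathcal{P})|$ equals the number of edges crossing $C^{*}$, which is $\lambda$. Hence $\mathtt{part\_val}(\mathcal{P}) = \lambda/(2-1) = \lambda$, and since $\Phi$ is the minimum over all partitions, $\Phi \leq \lambda$.

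\emph{Lower bound.} Fix any partition $\mathcal{P}=\{P_{1},\dots,P_{k}\}$ with $k=|\mathcal{P}|\geq 2$. For each $i$, the cut $(P_{i}, V\setminus P_{i})$ has weight at least $\lambda$ by minimality, so the number of edges of $G$ with exactly one endpoint in $P_{i}$ is at least $\lambda$. Summing over $i$ counts every crossing edge in $E(G/\mathcal{P})$ exactly twice (once for each of its two endpoints), which gives $2|E(G/\mathcal{P})| \geq k\lambda$, i.e.\ $|E(G/\mathcal{P})| \geq k\lambda/2$. Dividing by $k-1$ yields
\[
\mathtt{part\_val}(\mathcal{P}) \;=\; \frac{|E(G/\mathcal{P})|}{k-1} \;\geq\; \frac{k}{2(k-1)}\,\lambda \;>\; \frac{\lambda}{2},
\]
where the last strict inequality uses $k/(k-1) > 1$ for any finite $k \geq 2$. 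Taking the minimum over $\mathcal{P}$ gives $\Phi > \lambda/2$.

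The argument is elementary once the duality $\max \mathtt{pack\_val} = \min \mathtt{part\_val}$ is in hand; the only mild subtlety is remembering to use the minimum-cut lower bound once for each component and to correct for the factor-of-two double counting of crossing edges. Neither direction appears to present a real obstacle.
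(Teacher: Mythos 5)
Your proposal is correct and follows essentially the same route as the paper: the upper bound via the two-part partition induced by a minimum cut, and the lower bound via the observation that each part of a partition induces a cut of weight at least $\lambda$ while the sum of these cut weights equals $2|E(G/\mathcal{P})|$, with the strict inequality coming from $|\mathcal{P}|/(|\mathcal{P}|-1)>1$. The only cosmetic difference is that you bound $\mathtt{part\_val}(\mathcal{P})$ for every partition, whereas the paper applies the same averaging argument only to the optimal partition $\mathcal{P}^*$; these are equivalent.
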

\begin{proof} 
$\Phi \leq \lambda$ is obvious because a minimum cut is a partition with partition value exactly $\lambda$. Consider an optimal partition $\mathcal{P}^*$. Let $C_{\min}$ be the smallest cut induced by the components in $\mathcal{P}^*$. We have
$$\lambda \leq w(C_{\min}) \leq \frac{\sum_{S \in \mathcal{P}^{*}}|E(S,V \setminus S)| }{|\mathcal{P}^{*}|} \leq \frac{2|E(G/\mathcal{P}^{*})|}{|\mathcal{P}^{*}|} < 2\Phi. \qedhere$$
\end{proof}

%\danupon{If we have to save space: Do we need all inequalities above and the lemma? If not, we can make this part shorter. (Hsin-Hao: Sure, but if we have space I think it might help the reader to get a sense about how min cut and $\Phi$ are related.)}

Thorup \cite{Thorup07} defined the {\it ideal relative loads} $\ell^{*}(e)$ on the edges of $G$ by the following.%\danupon{The definition below is a bit hard to read}
\begin{enumerate}[noitemsep]
\item Let $\mathcal{P}^{*}$ be an optimal partition with $\mathtt{part\_val}(\mathcal{P}^*) = \Phi$. %\danupon{Didn't you already define ${\cal P}^*$?}
\item For all $e \in G / \mathcal{P}^{*}$, let $\ell^{*}(e) = 1/\Phi$.
\item For each $S \in \mathcal{P}^{*}$, recurse the procedure on the subgraph $G | S$.
%; i.e. for each edge $e$ in $G | S$, we let $\ell^{*}(e) = 1/\Phi_{G|S}$. 
\end{enumerate}

Define the following notations:
$$E^{X}_{\circ \delta} = \{e \in E \mid \ell^{X}(e) \;\circ\; \delta \}$$ 
where $X$ can be $\mathcal{T}$ or $*$, and $\circ$ can be $<$, $>$, $\leq$, $\geq$, or $=$. For example, $E^{*}_{<\delta}$ denote the set of edges with ideal relative loads smaller than $\delta$.

\begin{lemma}[\cite{Thorup07}, Lemma 14]\label{lem:phi_non_decreasing} The values of $\Phi$ are non-decreasing in the sense that for each $S \in P^{*}, \Phi_{G|S} \geq \Phi$ \end{lemma}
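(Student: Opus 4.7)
The plan is to argue by contradiction using the duality $\Phi = \min_{\mathcal{P}} \mathtt{part\_val}(\mathcal{P})$ established just before the lemma. Suppose, toward contradiction, that some $S \in \mathcal{P}^{*}$ satisfies $\Phi_{G|S} < \Phi$. Then there exists a partition $\mathcal{P}'_S$ of $S$ (with at least two parts, since otherwise $\mathtt{part\_val}$ is not defined / is infinite) such that
\[
\mathtt{part\_val}_{G|S}(\mathcal{P}'_S) \;=\; \frac{|E((G|S)/\mathcal{P}'_S)|}{|\mathcal{P}'_S| - 1} \;<\; \Phi.
\]
I would then build a refinement $\mathcal{P}''$ of $\mathcal{P}^{*}$ by leaving every part other than $S$ untouched and replacing $S$ by the parts of $\mathcal{P}'_S$, and show that $\mathtt{part\_val}(\mathcal{P}'') < \Phi$, contradicting the optimality of $\mathcal{P}^{*}$.

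The key bookkeeping step is to verify that refinement combines the two fractions additively in numerator and denominator. Write $k = |\mathcal{P}^{*}|$ and $m = |\mathcal{P}'_S|$, so $|\mathcal{P}''| = k + m - 1$. For the edges, every edge of $G$ is of exactly one of the following types with respect to $\mathcal{P}''$: (i) it crosses between two parts of $\mathcal{P}^{*}$ neither of which is $S$; (ii) it has exactly one endpoint in $S$ and crosses out of $S$; (iii) it lies inside $S$ and crosses between parts of $\mathcal{P}'_S$; (iv) it lies inside a part other than $S$, or inside a single part of $\mathcal{P}'_S$. The first two types contribute exactly $|E(G/\mathcal{P}^{*})|$, the third contributes $|E((G|S)/\mathcal{P}'_S)|$, and the fourth does not cross in $\mathcal{P}''$. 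Hence
\[
|E(G/\mathcal{P}'')| \;=\; |E(G/\mathcal{P}^{*})| + |E((G|S)/\mathcal{P}'_S)|.
\]

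Now I can finish by the weighted mediant inequality: since $|E(G/\mathcal{P}^{*})| = \Phi(k-1)$ and $|E((G|S)/\mathcal{P}'_S)| < \Phi(m-1)$, we have
\[
\mathtt{part\_val}(\mathcal{P}'') \;=\; \frac{|E(G/\mathcal{P}^{*})| + |E((G|S)/\mathcal{P}'_S)|}{(k-1) + (m-1)} \;<\; \frac{\Phi(k-1) + \Phi(m-1)}{(k-1) + (m-1)} \;=\; \Phi,
\]
contradicting $\Phi = \min_{\mathcal{P}}\mathtt{part\_val}(\mathcal{P})$. Therefore $\Phi_{G|S} \geq \Phi$ for every $S \in \mathcal{P}^{*}$. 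Singleton parts $S$ cause no issue, since $G|S$ has no nontrivial partition and $\Phi_{G|S}$ is taken to be $+\infty$ so the inequality is vacuous.

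The only genuine subtlety is the edge-counting step above: one must be careful that edges leaving $S$ are counted in the $\mathcal{P}^{*}$ term and not double-counted in the $(G|S)/\mathcal{P}'_S$ term, which is precisely why restricting $\mathcal{P}'_S$ to the induced subgraph $G|S$ (rather than allowing it to see inter-part edges) is the correct object to refine with. Once that is set up, the rest is just the mediant inequality, so I do not expect any deeper obstacle.
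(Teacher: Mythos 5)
Your proof is correct: the refinement-plus-mediant argument is exactly the standard proof of this fact (it is how Thorup proves his Lemma 14, which the paper cites without reproducing a proof), and your edge-counting decomposition and the handling of singleton parts are both sound. No gaps.
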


\begin{corollary}\label{cor:subgraphcut} Let $0 \leq l \leq 1/\Phi$. Each component $H$ of the graph $(V, E^{*}_{\leq l})$ must have edge-connectivity of at least $\Phi$. \end{corollary}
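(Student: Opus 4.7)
The plan is to prove the corollary by strong induction on $|V(S)|$, establishing the slightly stronger statement that for every vertex set $S$ arising in the recursion defining $\ell^*$ and every $l \in [0,1/\Phi_G]$, each connected component $H$ of $(V(S),\, E^{*}_{\leq l} \cap E(G|S))$ satisfies $\lambda_{G|V(H)} \geq \Phi_G$. Taking $S = V(G)$ recovers the corollary; the base case $|V(S)|=1$ is vacuous, since a single vertex supports no nontrivial cut.

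For the inductive step, let $\mathcal{P}^{*}_{G|S}$ be the optimal partition realizing $\Phi_{G|S}$; iterating \Cref{lem:phi_non_decreasing} along the recursion chain from $G$ down to $S$ yields $\Phi_{G|S} \geq \Phi_G$. I split on the size of $l$ relative to $1/\Phi_{G|S}$. If $l \geq 1/\Phi_{G|S}$, every edge of $G|S$ already belongs to $E^{*}_{\leq l}$: the edges crossing $\mathcal{P}^{*}_{G|S}$ receive load exactly $1/\Phi_{G|S} \leq l$, and for every deeper set $S'' \subseteq S$ in the recursion, non-decreasingness gives $1/\Phi_{G|S''} \leq 1/\Phi_{G|S} \leq l$, so the edges crossing deeper partitions are also included. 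Hence $V(H) = V(S)$, and \Cref{lem:philambda} together with the chain $\lambda_{G|S} \geq \Phi_{G|S} \geq \Phi_G$ closes this case.

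If instead $l < 1/\Phi_{G|S}$, then the edges crossing $\mathcal{P}^{*}_{G|S}$ carry load $1/\Phi_{G|S} > l$ and are excluded from $E^{*}_{\leq l}$, so $V(H) \subseteq S'$ for some $S' \in \mathcal{P}^{*}_{G|S}$ with $|V(S')| < |V(S)|$. Because the outer recursion assigns to edges of $G|S'$ exactly the loads produced by running the ideal-load construction on $G|S'$ in isolation, the induction hypothesis applied to $G|S'$ delivers $\lambda_{G|V(H)} \geq \Phi_G$. The one subtle point I expect to state explicitly, and the main source of potential confusion, is this self-similarity: that the ideal loads inherited by $E(G|S)$ from the outer recursion on $G$ coincide with those one would obtain by running the definition directly on $G|S$. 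Once this observation is in place, the argument is simply the combination of \Cref{lem:philambda} and the monotonicity \Cref{lem:phi_non_decreasing} of $\Phi$ down the recursion.
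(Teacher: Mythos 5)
Your proof is correct and follows the same route as the paper's (two-sentence) proof: identify each component of $(V,E^{*}_{\leq l})$ with an induced subgraph $G|S$ for a set $S$ arising in the recursion defining $\ell^*$, then combine the monotonicity of $\Phi$ (\Cref{lem:phi_non_decreasing}) with $\lambda_H \geq \Phi_H$ (\Cref{lem:philambda}). You simply make explicit, via induction on the recursion tree, the structural fact the paper leaves implicit — that such a component really is a full induced subgraph on a recursion set — which is a reasonable amount of extra care rather than a different argument.
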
 
\begin{proof}
Accroding to how the ideal relative load was defined and Lemma \ref{lem:phi_non_decreasing}, we must have $\Phi_H \geq \Phi$. By Lemma \ref{lem:philambda}, $\lambda_{H} \geq \Phi_{H} \geq \Phi$.
\end{proof}

Thorup showed that the relative loads of a greedy tree packing with a sufficient number of trees approximate the ideal relative loads, due to the fact that greedily packing the trees simulates the multiplicative weight update method. He showed the following lemma.
\begin{lemma}[\cite{Thorup07}, Proposition 16]\label{lem:approxload} A greedy tree packing $\mathcal{T}$ with at least $(6 \lambda \ln m) / \epsilon^2$ trees, $\epsilon < 2$ has $|\ell^{\mathcal{T}}(e) - \ell^{*}(e)| \leq \epsilon /\lambda$ for all $e \in E$.\end{lemma}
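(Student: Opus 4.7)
The plan is to view the greedy tree packing as a discretized multiplicative--weight--update (MWU) procedure that solves the LP dual of the fractional tree packing problem, and then to combine a standard MWU regret bound with the recursive structure of $\ell^{*}$.

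First, I will establish the upper direction $\ell^{\mathcal{T}}(e) \le \ell^{*}(e) + \epsilon/\lambda$. Fix any component $H = G \mid S$ appearing in the recursive construction of $\ell^{*}$, and consider the sub-packing $\mathcal{T}_H$ induced by restricting each tree of $\mathcal{T}$ to $H$. Because each $T_i$ is an MST in $G$ under the current load, its restriction to $H$ is an MST of $H$ under the loads induced within $H$ (MST is preserved by edge-cuts). So $\mathcal{T}_H$ is itself a greedy tree packing in $H$. Now I set up the potential $\Psi_k = \sum_{e \in E(H)} (1+\eta)^{\mathcal{L}^{\mathcal{T}_k}(e)}$ with $\eta = \epsilon / \lambda$. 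By the Nash--Williams--Tutte theorem there exists a fractional spanning tree packing in $H$ of value $\Phi_H$ in which every edge is used with relative load $1/\Phi_H \le 1/\Phi$; comparing the cost of $T_{k+1} \cap H$ (as the MST minimizer) against a random tree drawn from this fractional packing shows that $\Psi_{k+1} / \Psi_k \le 1 + \eta (n-1)/\Phi_H$ up to an $O(\eta^2)$ term. Iterating and taking logarithms yields $\max_{e \in E(H)} \mathcal{L}^{\mathcal{T}}(e) \le |\mathcal{T}| / \Phi_H + O(\eta |\mathcal{T}|/\Phi_H) + O(\log m / \eta)$, i.e.\ $\max_{e \in E(H)} \ell^{\mathcal{T}}(e) \le 1/\Phi_H + \epsilon/\lambda$ once $|\mathcal{T}| \ge 6 \lambda \ln m / \epsilon^2$. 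Applied at the level of $H$ where $\ell^{*}(e) = 1/\Phi_H$, this gives the required upper bound on $\ell^{\mathcal{T}}(e)$ for every edge $e$ crossing the top partition of $H$.

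For the matching lower bound $\ell^{\mathcal{T}}(e) \ge \ell^{*}(e) - \epsilon/\lambda$, I will use an averaging/duality argument. Let $\mathcal{P}^{*}_H$ be the partition achieving $\mathtt{part\_val}(\mathcal{P}^{*}_H) = \Phi_H$ in $H$. Every spanning tree of $H$ must contain at least $|\mathcal{P}^{*}_H|-1$ edges crossing $\mathcal{P}^{*}_H$, so the average relative load across $E(H/\mathcal{P}^{*}_H)$ is at least $(|\mathcal{P}^{*}_H|-1)/|E(H/\mathcal{P}^{*}_H)| = 1/\Phi_H$. Combined with the \emph{upper} bound from the previous paragraph (which says no edge in this crossing set has relative load exceeding $1/\Phi_H + \epsilon/\lambda$), a pigeonhole-style inequality forces \emph{every} such edge to have relative load at least $1/\Phi_H - \epsilon/\lambda = \ell^{*}(e) - \epsilon/\lambda$. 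Finally, induction on the recursion depth of the definition of $\ell^{*}$ propagates both bounds to edges at deeper levels, using the fact established above that $\mathcal{T}$ restricts to a greedy packing on each $H = G \mid S$.

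The main obstacle will be the MWU potential step: one must justify that the MST under linear loads is almost as good as the MST under exponential weights $(1+\eta)^{\mathcal{L}}$, so that the potential-drop analysis can be applied. The key quantitative lemma needed is that when $\eta \cdot \max_e \mathcal{L}^{\mathcal{T}_k}(e) = O(1)$, the two orderings of spanning trees differ in cost by a $(1 + O(\eta))$ factor, which is exactly why the hypothesis $\epsilon < 2$ appears and why the number of trees must scale as $\lambda \log m / \epsilon^2$ rather than $\log m / \epsilon^2$. Once this technical comparison is in hand, the rest of the argument is a routine telescoping of the potential.
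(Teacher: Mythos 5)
First, note that the paper does not prove this lemma at all: it is imported verbatim from Thorup (\cite{Thorup07}, Proposition 16), so there is no in-paper proof to compare against, and your task here is really to reconstruct Thorup's argument. Judged on its own terms, your reconstruction has genuine gaps. The most basic one is the claim that each greedy tree restricted to a component $H = G \mid S$ of the ideal decomposition ``is an MST of $H$ under the loads induced within $H$,'' so that $\mathcal{T}_H$ is again a greedy packing. The restriction of a spanning tree of $G$ to an induced subgraph is in general a \emph{disconnected forest}, not a spanning tree (take $G$ a triangle $a,b,c$ with $S=\{a,c\}$ and $T=\{ab,bc\}$: the restriction to $G\mid S$ is empty), and even when it is connected it need not be load-minimum inside $H$, since the MST cut and cycle rules in $G$ compare edges of $H$ against edges outside $H$. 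Both your upper-bound argument and the closing induction on recursion depth rest on this restriction property, so the skeleton of the proof does not stand as written; Thorup has to \emph{derive} an approximate version of this statement from the load bounds, not assume it.

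Two further steps fail quantitatively. For the lower bound you argue: the average relative load over $E(H/\mathcal{P}^{*}_H)$ is at least $1/\Phi_H$ and the maximum is at most $1/\Phi_H+\epsilon/\lambda$, hence every crossing edge has load at least $1/\Phi_H-\epsilon/\lambda$. That inference is invalid: from an average $\geq A$ and a maximum $\leq A+\delta$ over $N$ values one only gets a minimum $\geq A-(N-1)\delta$, and here $N=|E(H/\mathcal{P}^{*}_H)|$ can be as large as $m$, so the guaranteed deviation is a factor of roughly $m$ worse than what the lemma asserts. Separately, the ``key quantitative lemma'' you defer --- that the MST under linear loads is within a $(1+O(\eta))$ factor of the minimizer of $\sum_{e\in T}(1+\eta)^{\mathcal{L}(e)}$ whenever $\eta\max_e\mathcal{L}(e)=O(1)$ --- is invoked outside its hypothesis: with $\eta=\epsilon/\lambda$, $|\mathcal{T}|=6\lambda\ln m/\epsilon^2$, and maximum load about $|\mathcal{T}|/\Phi$, one gets $\eta\max_e\mathcal{L}(e)=\Theta(\ln m/(\epsilon\lambda))$, which is $\Theta(\ln m)$ in exactly the regime the paper uses for exact min-cut ($\epsilon=1/(\lambda'+1)$). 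So the potential-drop step is not ``routine telescoping''; the coupling between the greedy (linear-load) choice and the exponential potential, and the level-by-level comparison to the ideal loads, are the actual content of Thorup's proof and are missing here.
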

%\danupon{Change $\geq 6 \lambda \ln m / \epsilon^2$ to $(6 \lambda \ln m) / \epsilon^2$}
%\begin{lemma}\label{lem:dualedge}Let $\mathcal{P}^{*}$ be an optimal partition and $\mathcal{T}^{*}$ be an optimal tree packing. For every $e \in G /\mathcal{P}^{*}$, $\ell^{T}(e) = \mathtt{pack\_val}(\mathcal{T}^{*})$. \end{lemma}

\subsection{Algorithms}\label{sec:meta algo}

%\danupon{CHECK: I changed the section title from (``The'' removed). I started the paragraph for computing mincut value earlier.}

In this section, we show how to approximate the value of the minimum cut as well as how to find an approximate minimum cut.
%
%first we show how to approximate the value of the minimum cut. Then we show how to find an approximate minimum cut.
% 

\paragraph{Algorithm for computing minimum cut value.} 
The main idea is that if we have a nearly optimal tree packing, then either $\lambda$ is close to $2\Phi$ or all the minimum cuts are crossed exactly once by some trees in the tree packing.

\begin{lemma}\label{lem:approxvalue} Suppose that $\mathcal{T}$ is a greedy tree packing with at least $6 \lambda \ln m / \epsilon^2$ trees, then $\lambda \leq (2+\epsilon)\cdot \mathtt{pack\_val}(\mathcal{T})$. Furthermore, if there is a minimum cut $C$ such that it is crossed at least twice by every tree in $\mathcal{T}$, then $(2+\epsilon)\cdot \mathtt{pack\_val}(\mathcal{T}) \leq (1+\epsilon/2)\lambda$. \end{lemma}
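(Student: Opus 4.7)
The plan is to prove the two inequalities separately, and in both cases the packing value $\mathtt{pack\_val}(\mathcal{T}) = 1/\max_e \ell^{\mathcal{T}}(e)$ is controlled by identifying a single edge whose relative load (or whose ideal relative load) is close to $1/\Phi$.

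For the first inequality, I would begin by observing that $\max_{e \in E} \ell^{*}(e) = 1/\Phi$. This follows directly from the recursive definition of $\ell^{*}$: the edges of $G/\mathcal{P}^{*}$ are assigned value $1/\Phi$, and on every recursive subproblem $G \mid S$ the assigned value is $1/\Phi_{G\mid S}$, which is at most $1/\Phi$ by \Cref{lem:phi_non_decreasing}. Combining this with \Cref{lem:approxload} (which is applicable since $|\mathcal{T}| \geq 6\lambda \ln m / \epsilon^2$) gives
\[
\max_{e \in E} \ell^{\mathcal{T}}(e) \;\leq\; \max_{e \in E}\ell^{*}(e) + \frac{\epsilon}{\lambda} \;=\; \frac{1}{\Phi} + \frac{\epsilon}{\lambda}.
\]
Taking reciprocals and then applying $\Phi > \lambda/2$ from \Cref{lem:philambda} yields
\[
\mathtt{pack\_val}(\mathcal{T}) \;\geq\; \frac{1}{\,1/\Phi + \epsilon/\lambda\,} \;>\; \frac{1}{\,2/\lambda + \epsilon/\lambda\,} \;=\; \frac{\lambda}{2+\epsilon},
\]
which rearranges to $\lambda \leq (2+\epsilon)\,\mathtt{pack\_val}(\mathcal{T})$.

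For the second inequality, suppose $C$ is a minimum cut (so $|E(C)| = \lambda$) that is crossed at least twice by every $T \in \mathcal{T}$. Summing edge-tree incidences on $C$ gives
\[
\sum_{e \in E(C)} \mathcal{L}^{\mathcal{T}}(e) \;=\; \sum_{T \in \mathcal{T}} |E(C) \cap T| \;\geq\; 2|\mathcal{T}|,
\]
and dividing by $|\mathcal{T}|$ yields $\sum_{e \in E(C)} \ell^{\mathcal{T}}(e) \geq 2$. Since $|E(C)| = \lambda$, the average relative load on $C$ is at least $2/\lambda$, so some edge of $C$, hence some edge of $G$, has $\ell^{\mathcal{T}}(e) \geq 2/\lambda$. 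Therefore $\mathtt{pack\_val}(\mathcal{T}) \leq \lambda/2$, and multiplying by $2+\epsilon$ gives the claimed bound $(2+\epsilon)\mathtt{pack\_val}(\mathcal{T}) \leq (1+\epsilon/2)\lambda$.

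I do not expect a serious obstacle here. The only subtle point is verifying that $\max_e \ell^*(e) = 1/\Phi$, which requires unpacking the recursive definition of the ideal relative loads and invoking \Cref{lem:phi_non_decreasing}; everything else is direct from the weak duality already laid out, Karger's bound $\lambda < 2\Phi$, and a simple counting of how many times $C$ is crossed by the trees in $\mathcal{T}$.
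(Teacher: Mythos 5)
Your proof is correct and follows essentially the same route as the paper's: the first inequality via \Cref{lem:approxload} and $\Phi > \lambda/2$, and the second via the averaging bound $2/\lambda \leq \max_{e\in C}\ell^{\mathcal{T}}(e) \leq 1/\mathtt{pack\_val}(\mathcal{T})$. The only difference is that you explicitly justify $\max_e \ell^{*}(e) = 1/\Phi = 1/\mathtt{pack\_val}(\mathcal{T}^*)$ from the recursive definition and \Cref{lem:phi_non_decreasing}, a step the paper leaves implicit.
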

\begin{proof}By \Cref{lem:approxload,lem:philambda}, $1/{\mathtt{pack\_val}(\mathcal{T})} \leq 1/{\mathtt{pack\_val}(\mathcal{T}^*)} + \epsilon / \lambda  \leq 2/\lambda + \epsilon /\lambda$. Therefore, $\lambda \leq (2+\epsilon)\cdot \mathtt{pack\_val}(\mathcal{T})$.

If each tree in $\mathcal{T}$ crosses $C$ at least twice, we have $\sum_{e \in C} \ell^{\mathcal{T}}(e) \geq 2$. Therefore, 
\begin{equation}\label{eqn:packval} 2/\lambda \leq \sum_{e \in C} \ell^{\mathcal{T}}(e) / w(C) \leq \max_{e\in C}\ell^{\mathcal{T}}(e)  \leq 1/\mathtt{pack\_val}(\mathcal{T})\,.\end{equation} 
This implies that $(2+\epsilon)\cdot \mathtt{pack\_val}(\mathcal{T}) \leq (1+\epsilon/2)\lambda$.
\end{proof}

Using \Cref{lem:approxvalue}, we can obtain a simple algorithm for $(1+\epsilon)$-approximating the minimum cut {\em value}. First, greedily pack $\Theta(\lambda \log n / \epsilon^2)$ trees and compute the minimum cut that 1-respects the trees (using our algorithm in \Cref{sec:one respect tree}). Then, output the smaller value between the minimum cut found and $(2+\epsilon)\cdot \mathtt{pack\_val}(\mathcal{T})$. The running time is discussed in Section \ref{sec:distributed implementation}. %\danupon{We should conclude the running time of this algorithm in some theorem.} 

\paragraph{Algorithm for finding a minimum cut.} 
%\danupon{Do we need \Cref{lem:approxvalue} anywhere below? (Hsin-Hao: No)}
More work is needed to be done if we want to {\em find} the $(1+\epsilon)$-approximate minimum cut (i.e.~each node wants to know which side of the cut it is on). Let $\epsilon' = \Theta(\epsilon)$ be such that $(1-2\epsilon')\cdot(1-\epsilon') = 1/(1 + \epsilon)$. Let $l_{a} = (1-2\epsilon')/\mathtt{pack\_val(\mathcal{T})}$. We describe our algorithm in \Cref{alg:approxmincut}.
%
%\danupon{I changed the second line of \Cref{alg:approxmincut} since \textsc{Min-$1$-Respect($\mathcal{T}$)} is not really defined.}
%
%
\begin{algorithm}[H]
\caption{\textsc{Approx-Min-Cut($G$)}}
\begin{algorithmic}[1]\label{alg:approxmincut}

\STATE \label{line:treepacking}Find a greedy tree packing $\mathcal{T}$ with $(6\lambda \ln m) / \epsilon'^2$ trees in $G$.
%\STATE $C^{*} \leftarrow \textsc{Min-$1$-Respect($\mathcal{T}$)}$.
%\STATE Compute a minimum cut that $1$-respects each tree in ${\cal T}$. 
%%(using the algorithm in \Cref{sec:one respect tree}). 
%Let $C^*$ be the minimum among such cuts. 

\STATE Let $C^*$ be the minimum cut among cuts that $1$-respect a tree in ${\cal T}$.

\STATE Let $l_a = (1-2\epsilon')/\mathtt{pack\_val(\mathcal{T})}$.
\IF{$(V, E^{\mathcal{T}}_{< l_a})$ has more than $(1-\epsilon')|V|$ components}
	\STATE Let $C_{\min}$ be the smallest cut induced by the components in $(V, E^{\mathcal{T}}_{< l_a})$.
\ELSE
	\STATE Let $C_{\min}$ be the cut returned by \textsc{Approx-Min-Cut($G / E^{T}_{< l_a}$)}.
\ENDIF
\STATE Return the smaller cut between $C^{*}$ and $C_{\min}$.
\end{algorithmic}
\end{algorithm}
%
%\danupon{DISCUSS: We might want to state a theorem here such as ``\Cref{alg:approxmincut} gives a $(1+\epsilon)$-approximate minimum cut''.}
%
The main result of this subsection is the following theorem.%\danupon{I just added the theorem.}

\begin{theorem}\label{thm:approximate mincut algo}
\Cref{alg:approxmincut} gives a $(1+\epsilon)$-approximate minimum cut.
\end{theorem}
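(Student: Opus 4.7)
My plan is to prove \Cref{thm:approximate mincut algo} by strong induction on $|V(G)|$. In the inductive step, let $C$ be any minimum cut of $G$ with $w(C)=\lambda$, and split on whether some tree in $\mathcal{T}$ $1$-respects $C$. In the easy case some $T\in\mathcal{T}$ $1$-respects $C$, and then $C^{*}$ (the minimum over all cuts that $1$-respect a tree of $\mathcal{T}$) equals $C$ exactly, so $\min(C^{*},C_{\min})\le\lambda$ regardless of which branch the algorithm enters. The interesting case is when every tree in $\mathcal{T}$ crosses $C$ at least twice; then \Cref{lem:approxvalue} gives $(2+\epsilon')\,\mathtt{pack\_val}(\mathcal{T})\le(1+\epsilon'/2)\lambda$, which forces
\[
l_{a} \;=\; \frac{1-2\epsilon'}{\mathtt{pack\_val}(\mathcal{T})} \;\ge\; \frac{(1-2\epsilon')(2+\epsilon')}{(1+\epsilon'/2)\,\lambda}.
\]

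For the IF branch in this hard case I would bound $w(C_{\min})$ by a pure counting argument. Every edge of $E^{\mathcal{T}}_{<l_{a}}$ has both endpoints in the same component of $(V,E^{\mathcal{T}}_{<l_{a}})$ by construction, hence $E(G/\mathcal{P}')\subseteq E^{\mathcal{T}}_{\ge l_{a}}$. Because each spanning tree contributes $|V|-1$ edges, the total relative load is $\sum_{e\in E}\ell^{\mathcal{T}}(e)=|V|-1$, so $|E^{\mathcal{T}}_{\ge l_{a}}|\le(|V|-1)/l_{a}$. The sum of component boundaries equals $2|E(G/\mathcal{P}')|$, giving $w(C_{\min})\le 2(|V|-1)/(l_{a}\,|\mathcal{P}'|)$. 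Plugging in the branch condition $|\mathcal{P}'|>(1-\epsilon')|V|$ together with the lower bound on $l_{a}$, and using the paper's calibration $(1-2\epsilon')(1-\epsilon')=1/(1+\epsilon)$, the ratio telescopes to exactly $(1+\epsilon)\lambda$.

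For the ELSE branch, we recurse on $G'=G/E^{\mathcal{T}}_{<l_{a}}$. Since at least $\epsilon'|V|$ merges occur, $G'$ is strictly smaller than $G$ and by induction the recursion returns a cut of weight at most $(1+\epsilon)\lambda_{G'}$. To transfer this bound to $G$ it suffices to show $\lambda_{G'}\le\lambda$, equivalently that some minimum cut of $G$ uses no edge of $E^{\mathcal{T}}_{<l_{a}}$. The route I would take combines \Cref{lem:approxload} with \Cref{cor:subgraphcut}: for any $e\in E^{\mathcal{T}}_{<l_{a}}$ we have $\ell^{*}(e)\le\ell^{\mathcal{T}}(e)+\epsilon'/\lambda<l_{a}+\epsilon'/\lambda$, and using $\Phi>\lambda/2$ from \Cref{lem:philambda} one checks $l_{a}+\epsilon'/\lambda\le 1/\Phi$; thus each component of $(V,E^{\mathcal{T}}_{<l_{a}})$ lies inside a component of $(V,E^{*}_{\le 1/\Phi})$ of internal edge-connectivity at least $\Phi$. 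The main obstacle will be finishing this step cleanly: the purely combinatorial observation $\lambda<2\Phi$ only bounds the number of components split by a min cut by at most one, and one must exploit the $\epsilon'$-slack (or the strict inequality in \Cref{lem:philambda}) to rule out even a single split, so that some min cut of $G$ actually survives as a cut in $G'$. Everything else is bookkeeping around \Cref{lem:approxvalue,lem:approxload,cor:subgraphcut,lem:philambda}.
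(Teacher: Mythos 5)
Your easy case and your analysis of the IF branch are sound and essentially identical to the paper's: the averaging bound $w(C_{\min})\le 2(|V|-1)/(l_a|\mathcal{P}'|)$ combined with $1/\mathtt{pack\_val}(\mathcal{T})\ge 2/\lambda$ (valid in the hard case) and the calibration of $\epsilon'$ gives exactly $(1+\epsilon)\lambda$. The problem is the ELSE branch, and the obstacle you flag there is not a loose end but the actual core of the theorem. Your route cannot close it as stated: since every ideal relative load satisfies $\ell^*(e)\le 1/\Phi$, the graph $(V,E^*_{\le 1/\Phi})$ is all of $G$, so "each component of $(V,E^{\mathcal{T}}_{<l_a})$ lies inside a component of $(V,E^*_{\le 1/\Phi})$" is vacuous; and even working with a strictly smaller threshold, the connectivity-$\Phi$ argument together with $\lambda<2\Phi$ only shows a minimum cut can split at most one such component, which is exactly one split too many. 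Without ruling that out you cannot conclude $\lambda_{G'}\le\lambda$, and the inductive step collapses.

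What the paper does instead is argue from the cut side, not the edge side. Fix a minimum cut $C$ crossed at least twice by every tree and let $l_0=\min_{e\in C}\ell^*(e)$. The component $H$ of $(V,E^*_{\le l_0})$ containing a minimizing edge is split by $C$, so \Cref{cor:subgraphcut} gives $w(C\cap H)\ge\Phi$, i.e.\ at least $\Phi$ edges of $C$ carry ideal load exactly $l_0$; bounding the remaining at most $\lambda-\Phi$ edges by $1/\Phi$ each yields $\sum_{e\in C}\ell^*(e)<\Phi l_0+1$ via \Cref{lem:philambda}. Playing this against the lower bound $\sum_{e\in C}\ell^*(e)\ge 2-\epsilon'$ (from the two-crossing assumption plus \Cref{lem:approxload}) forces $l_0\ge(1-\epsilon')/\Phi$, and two more applications of \Cref{lem:approxload} and of the inequality $1/\mathtt{pack\_val}(\mathcal{T})\ge 2/\lambda$ give $\ell^{\mathcal{T}}(e)\ge(1-2\epsilon')/\mathtt{pack\_val}(\mathcal{T})=l_a$ for \emph{every} $e\in C$. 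Hence no edge of $C$ is contracted, $C$ survives in $G/E^{\mathcal{T}}_{<l_a}$, and your induction goes through. This is the precise quantitative use of the $\epsilon'$-slack you were looking for; without some version of this lemma the proof is incomplete.
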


The rest of this subsection is devoted to proving \Cref{thm:approximate mincut algo}. First, observe that if a minimum cut is crossed exactly once by a tree in $\mathcal{T}$, then $C^{*}$ must be a minimum cut. Otherwise, $C$ is crossed at least twice by every tree in $\mathcal{T}$. In this case, we will show that the edges of every minimum cut will be included in $E^{\mathcal{T}}_{\geq l_{a}}$. As a result, we can contract each connected component in the partition $(V,E^{\mathcal{T}}_{< l_{a}})$ without contracting any edges of the minimum cuts.

If $(V,E^{\mathcal{T}}_{< l_{a}})$ has at most $(1-\epsilon')|V|$ components, then we contract each component and then recurse. The recursion can only happen at most $O(\log n / \epsilon)$ times, since the number of nodes reduces by a $(1-\epsilon')$ factor in each level. On the other hand, if $(V,E^{\mathcal{T}}_{< l_{a}})$ has more than $(1-\epsilon')|V|$ components, then we will show that one of the components induces an approximate minimum cut.

\begin{lemma}Let $C$ be a minimum cut such that $C$ is crossed at least twice by every tree in $\mathcal{T}$. For all $e \in C$, $\ell^{\mathcal{T}}(e) \geq (1-2\epsilon')/\mathtt{pack\_val(\mathcal{T})}$.\end{lemma}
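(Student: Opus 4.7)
The plan is to combine the 2-crossing hypothesis with \Cref{lem:approxload} and a structural analysis of the ideal relative loads $\ell^{*}$ on $C$.

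The 2-crossing hypothesis immediately gives
\[
\sum_{e\in C}\ell^{\mathcal{T}}(e)\;=\;\frac{1}{|\mathcal{T}|}\sum_{T\in\mathcal{T}}|E(T)\cap C|\;\geq\;2.
\]
Applying \Cref{lem:approxload} with parameter $\epsilon'$ to each edge of $C$ and summing over the $|C|=\lambda$ edges (multigraph view) yields $\sum_{e\in C}\ell^{*}(e)\geq 2-\epsilon'$. Combined with the universal bound $\ell^{*}(e)\leq 1/\Phi$ (which follows from the non-decreasing property of $\Phi_H$ in the recursion) this forces
\[
\Phi\;\leq\;\frac{\lambda}{2-\epsilon'},\qquad\text{hence}\qquad\frac{1}{\Phi}\;\geq\;\frac{2-\epsilon'}{\lambda}.
\]

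The core step is to establish the stronger claim $\ell^{*}(e)=1/\Phi$ for every $e\in C$. I would prove this by choosing the optimal partition $\mathcal{P}^{*}$ underlying the definition of $\ell^{*}$ so that it refines the bipartition $C=(A,V\setminus A)$. The key counting step is: if a part $S\in\mathcal{P}^{*}$ crosses $C$, the partition obtained by splitting $S$ along $C$ still has value $\geq\Phi$ (since $\Phi$ is minimal), and a one-line computation turns this into $\beta_{S}\geq\Phi>\lambda/2$, where $\beta_{S}$ counts the $C$-edges internal to $S$. Since $\sum_{S}\beta_{S}\leq|C|=\lambda$, at most one part of $\mathcal{P}^{*}$ can cross $C$; the same reasoning applied recursively inside that single crossing part (where $\Phi_{H}\geq\Phi>|C\cap H|/2$ by \Cref{lem:phi_non_decreasing}) iteratively refines $\mathcal{P}^{*}$ to respect $C$ at every level, so that every $e\in C$ is assigned its ideal load at a level where $\Phi_{H}=\Phi$.

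With $\ell^{*}(e)=1/\Phi$ in hand, \Cref{lem:approxload} provides
\[
\ell^{\mathcal{T}}(e)\;\geq\;\frac{1}{\Phi}-\frac{\epsilon'}{\lambda},\qquad\frac{1}{\mathtt{pack\_val}(\mathcal{T})}\;=\;\max_{e'}\ell^{\mathcal{T}}(e')\;\leq\;\frac{1}{\Phi}+\frac{\epsilon'}{\lambda}.
\]
Using $\Phi\leq\lambda$ to replace $\epsilon'/\lambda$ by the larger $\epsilon'/\Phi$ in both bounds,
\[
\ell^{\mathcal{T}}(e)\cdot\mathtt{pack\_val}(\mathcal{T})\;\geq\;\frac{1/\Phi-\epsilon'/\lambda}{1/\Phi+\epsilon'/\lambda}\;\geq\;\frac{1-\epsilon'}{1+\epsilon'}\;\geq\;1-2\epsilon',
\]
which rearranges to the claimed $\ell^{\mathcal{T}}(e)\geq(1-2\epsilon')/\mathtt{pack\_val}(\mathcal{T})$.

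The main obstacle is making the recursive refinement fully rigorous: specifically, ensuring that the ``$\beta_{S}\geq\Phi_{H}>|C\cap H|/2$'' argument (which forces at most one crossing part at each level) in fact drives the partition value at the level assigning $\ell^{*}(e)$ down to $\Phi$ exactly, rather than merely $\Phi_{H}\geq\Phi$. Verifying this requires tracking how the restricted cut inside the unique crossing part inherits sufficient structure from the global 2-crossing condition to keep the induction tight.
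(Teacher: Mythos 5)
There is a genuine gap, and it sits exactly where you flagged it: the ``core step'' $\ell^{*}(e)=1/\Phi$ for every $e\in C$ is not just unproven---it is false as stated. Your argument for it never invokes the hypothesis that every tree of $\mathcal{T}$ crosses $C$ at least twice, so if it worked it would be an unconditional structural fact about minimum cuts; it is not. The refinement argument correctly shows that at most one part $S\in\mathcal{P}^{*}$ can cross $C$ (via $\beta_{S}\geq\Phi>\lambda/2$), but it cannot rule out that single part, nor force $\Phi_{G|S}=\Phi$: the recursion only gives $\Phi_{G|S}\geq\Phi$ (\Cref{lem:phi_non_decreasing}), and when the inequality is strict the $C$-edges internal to $S$ receive ideal load $1/\Phi_{G|S}<1/\Phi$. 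A concrete instance: take a $K_4$ with one vertex $v$ having no outside edges and the other three vertices each attached by one edge to a large $3$-regular $3$-edge-connected graph $R$. Then $\lambda=3$ (the cut at $v$), the optimal partition is $\{K_4\}\cup\{\text{singletons of }R\}$ with $\Phi\approx 3/2$, yet the three min-cut edges live inside the part $K_4$, where $\Phi_{K_4}=2$, so they get ideal load $1/2<1/\Phi$. Note that in this example $\sum_{e\in C}\ell^{*}(e)=3/2<2-\epsilon'$, so by \Cref{lem:approxload} some tree crosses $C$ exactly once---i.e.\ the lemma's hypothesis fails there. This is precisely why the hypothesis must enter the argument about $\ell^{*}$, and your proposal has no channel through which it does.

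The paper sidesteps the exact-equality claim entirely. It sets $l_0=\min_{e\in C}\ell^{*}(e)$ and lower-bounds it by combining two facts: (i) the $2$-crossing hypothesis gives $\sum_{e\in C}\ell^{\mathcal{T}}(e)\geq 2$, hence $\sum_{e\in C}\ell^{*}(e)\geq 2-\epsilon'$ by \Cref{lem:approxload} (this part you have); and (ii) a counting argument showing the sum cannot be that large unless $l_0$ is large: the edges of $C$ attaining load $\leq l_0$ lie in a component $H$ of $(V,E^{*}_{\leq l_0})$, $C\cap H$ is a cut of $H$, so by \Cref{cor:subgraphcut} at least $\Phi$ edges of $C$ have load exactly $l_0$, while the remaining at most $\lambda-\Phi<\Phi$ edges have load at most $1/\Phi$; hence $2-\epsilon'\leq\sum_{e\in C}\ell^{*}(e)<\Phi l_0+1$, giving $l_0>(1-\epsilon')/\Phi$. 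Your closing computation would go through essentially unchanged from this weaker bound (the paper's own final chain uses $\lambda<2\Phi$ and \Cref{eqn:packval} rather than your $(1-\epsilon')/(1+\epsilon')$ manipulation, but that is cosmetic). So the fix is not to tighten the recursive refinement but to abandon exact equality and replace it with the sum-plus-counting bound on $l_0$.
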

\begin{proof}
The idea is to show that if an edge in $E(C)$ has a small relative load, then the average relative load over the edges in $E(C)$ will also be small. However, since each tree cross $E(C)$ twice, the average relative load should not be too small. Otherwise, a contradiction will occur.

Let $l_0 = \min_{e \in C} \ell^{*}(e)$ be the minimum ideal relative load over the edges in $E(C)$. Consider the induced subgraph $(V,E^{*}_{\leq l_0})$. $E(C)$ must contain some edges in a component of $(V,E^{*}_{\leq l_0})$, say component $H$. Notice that two endpoints of an edge in a minimum cut must lie on different sides of the cut. Therefore, $C \cap H$ must be a cut of $H$. By Corollary \ref{cor:subgraphcut}, $w(C \cap H) \geq \Phi$. Therefore, more than $\Phi$ edges in $C$ have ideal relative loads equal to $l_0$. Since the maximum relative load of an edge is at most $\frac{1}{\Phi}$, $\sum_{e \in C} \ell^{\mathcal{T}^*}(e) \leq \Phi\cdot l_0 + (\lambda - \Phi)\cdot \frac{1}{\Phi} = \Phi \cdot l_0 + \frac{\lambda}{\Phi} - 1 < \Phi\cdot l_0 + 1$, where the last inequality follows by Lemma \ref{lem:philambda} that $\lambda< 2\Phi$.

On the other hand, since each tree in $\mathcal{T}$ crosses $C$ at least twice, $\sum_{e \in C} \ell^{\mathcal{T}}(e) \geq 2$. By Lemma \ref{lem:approxload}, $\sum_{e \in C} \ell^{{*}}(e) \geq 2 - \epsilon'$.
Therefore, $\Phi \cdot l_0 +1>2 - \epsilon'$, which implies 
\begin{align*} l_0 &\geq (1-\epsilon') \cdot \frac{1}{\Phi} > \frac{1}{\Phi} - \frac{2\epsilon'}{\lambda} && \tag*{$\lambda < 2\Phi$} \\
&\geq 1/\mathtt{pack\_val}(\mathcal{T}) - \frac{3\epsilon'}{\lambda} && \tag*{By Lemma \ref{lem:approxload}}
\end{align*}
Therefore, by Lemma \ref{lem:approxload} again, for any $e \in E(C)$, $\ell^{\mathcal{T}}(e) \geq l_0 - \epsilon' / \lambda > 1/\mathtt{pack\_val(\mathcal{T})} - 4\epsilon'/ \lambda \geq (1-2\epsilon')/\mathtt{pack\_val(\mathcal{T})}$, where the last inequality follows from equation (\ref{eqn:packval}).
\end{proof}

\begin{lemma} Let $C_{\min}$ be the smallest cut induced by the components in $(V, E^{\mathcal{T}}_{< l_a})$. If $(V,E^{\mathcal{T}}_{<l_a})$ contains at least $(1-\epsilon')|V|$ components, then $w(C_{\min}) \leq (1+\epsilon)\lambda$. \end{lemma}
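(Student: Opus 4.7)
My plan is to prove this by an averaging argument over the components of the partition $\mathcal{P}$ that $(V, E^{\mathcal{T}}_{<l_a})$ induces on $V$. The intuition is that when $|\mathcal{P}|$ is close to $|V|$, the partition is very fine, so there can only be few edges of high relative load (since the total relative load is bounded by $|V|-1$), and therefore the average component-cut (and hence the minimum one) must be small.

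First I would observe that the edges crossing between distinct components of $\mathcal{P}$ are exactly the edges of $E^{\mathcal{T}}_{\geq l_a}$. Since each such edge contributes to $w(S, V\setminus S)$ for exactly the two components containing its endpoints, $\sum_{S\in\mathcal{P}} w(S, V\setminus S) = 2|E^{\mathcal{T}}_{\geq l_a}|$. Because $C_{\min}$ is the smallest among these $|\mathcal{P}|$ cuts, it is at most their average: $w(C_{\min}) \leq 2|E^{\mathcal{T}}_{\geq l_a}|/|\mathcal{P}|$. Next I would bound $|E^{\mathcal{T}}_{\geq l_a}|$ by a direct counting argument: each spanning tree in $\mathcal{T}$ uses exactly $|V|-1$ edges of the multigraph, so $\sum_e \ell^{\mathcal{T}}(e) = |V|-1$, and hence the number of edges of relative load at least $l_a$ is at most $(|V|-1)/l_a$. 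Substituting $l_a = (1-2\epsilon')/\mathtt{pack\_val}(\mathcal{T})$ and using the hypothesis $|\mathcal{P}| \geq (1-\epsilon')|V|$ yields
$$w(C_{\min}) \;\leq\; \frac{2(|V|-1)\,\mathtt{pack\_val}(\mathcal{T})}{(1-2\epsilon')(1-\epsilon')|V|} \;\leq\; \frac{2\,\mathtt{pack\_val}(\mathcal{T})}{(1-2\epsilon')(1-\epsilon')}.$$

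To finish, I would invoke the bound $\mathtt{pack\_val}(\mathcal{T}) \leq \lambda/2$. This is the step where the lemma must be read in context: it is applied in the branch of \Cref{thm:approximate mincut algo} in which every tree of $\mathcal{T}$ crosses the minimum cut at least twice (if some tree 1-respects a minimum cut then $C^*$ is already a minimum cut, and whether $C_{\min}$ satisfies the claimed bound is irrelevant for the algorithm's correctness). In that branch, equation~(\ref{eqn:packval}) from the proof of \Cref{lem:approxvalue} forces $\mathtt{pack\_val}(\mathcal{T}) \leq \lambda/2$. Combining this with the identity $(1-2\epsilon')(1-\epsilon') = 1/(1+\epsilon)$ that defines $\epsilon'$ yields $w(C_{\min}) \leq (1+\epsilon)\lambda$ as required.

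The main obstacle is recognizing that the tight factor really requires $\mathtt{pack\_val}(\mathcal{T}) \leq \lambda/2$ rather than only $\mathtt{pack\_val}(\mathcal{T}) \leq \Phi \leq \lambda$ from pure duality; the raw duality bound would give $2(1+\epsilon)\lambda$, off by a factor of two. Everything else is bookkeeping: the averaging step is just counting each cross edge twice, and the bound on $|E^{\mathcal{T}}_{\geq l_a}|$ is a Markov-type estimate using that each tree contributes exactly $|V|-1$ edges to the total relative load.
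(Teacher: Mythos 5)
Your proposal is correct and follows essentially the same route as the paper's proof: average the component cuts, bound the number of crossing edges via the total relative load $|V|-1$ and the threshold $l_a$, and convert $\mathtt{pack\_val}(\mathcal{T})$ to $\lambda/2$ using equation~(\ref{eqn:packval}) from the twice-crossed branch. The only nitpick is that the crossing edges form a subset of $E^{\mathcal{T}}_{\geq l_a}$ rather than equaling it (a high-load edge can have both endpoints joined by a low-load path), but since you only need an upper bound this is harmless.
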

\begin{proof} Let $\mathop{comp}(V,E^{\mathcal{T}}_{<l_a})$ denote the collection of connected components in $(V,E^{\mathcal{T}}_{<l_a})$, and $n'$, the number of connected components in $(V,E^{\mathcal{T}}_{<l_a})$.
By an averaging argument, we have
\begin{equation}\label{eqn:cmin}
w(C_{\min}) \leq \frac{\sum_{S \in \mathop{comp}(V,E^{\mathcal{T}}_{<l_a}) }|E(S, V \setminus S)|}{n'} = \frac{2|E(G / E^{\mathcal{T}}_{<l_a})|}{n'} \leq \frac{2|E(G / E^{\mathcal{T}}_{<l_a})|}{(1-\epsilon')\cdot |V|}
\end{equation}
Next we will bound $|E(G / E^{\mathcal{T}}_{<l_a})|$. Note that for each $e \in E(G / E^{\mathcal{T}}_{<l_a})$, $\ell^{\mathcal{T}}(e) \geq (1-2\epsilon')/\mathtt{pack\_val}(\mathcal{T})$.
\begin{align}
\sum_{e \in E(G / E^{\mathcal{T}}_{<l_a})} \ell^{\mathcal{T}}(e) &\geq |E(G / E^{\mathcal{T}}_{<l_a})| \cdot (1-2\epsilon')\cdot \left(\frac{1}{\mathtt{pack\_val}(\mathcal{T})} \right) \nonumber \\
&\geq |E(G / E^{\mathcal{T}}_{<l_a})| \cdot \left(1- {2\epsilon'} \right) \cdot \frac{2}{\lambda}\,. && \mbox{(by \Cref{eqn:packval})}\label{eqn:sum lower bound}
\end{align}
On the other hand, 
\begin{align}
\sum_{e \in E(G / E^{\mathcal{T}}_{<l_a})} \ell^{T}(e) &\leq |V| - 1, \label{eqn:sum upper bound}
\end{align} 
since each tree in $\mathcal{T}$ contains $|V| - 1$ edges. 
\Cref{eqn:sum lower bound,eqn:sum upper bound} together imply that 
$$|E(G/E^{\mathcal{T}}_{< l_a})|\leq \frac{\lambda \cdot |V|}{2(1-2\epsilon')}.$$
By plugging in this into (\Cref{eqn:cmin}), we get that
$$w(C_{\min}) \leq \frac{\lambda}{(1-2\epsilon')(1-\epsilon')} \leq (1+\epsilon)\lambda\,. \qedhere$$\end{proof}

\subsection{Distributed Implementation}\label{sec:distributed implementation}

In this section, we describe how to implement Algorithm \ref{alg:approxmincut} in the distributed setting. To compute the tree packing $\mathcal{T}$, it is straightforward to apply $|\mathcal{T}|$ minimum spanning tree computations with edge weights equal to their current loads. This can be done in $O(|\mathcal{T}|(D+ \sqrt{n} \log^{*} n))$ rounds by using the algorithm of Kutten and Peleg \cite{KuttenP98}. 

We already described how to computes the minimum cut that 1-respects a tree in $O(D+\sqrt{n} \log^{*} n)$ rounds in \Cref{sec:one respect tree}. To compute $l_a$, it suffices to compute $\mathtt{pack\_val}(\mathcal{T})$. To do this, each node first computes the largest relative load among the edges incident to it. By using the upcast and downcast techniques, the maximum relative load over all edges can be aggregated and boardcast to every node in $O(D)$ time. Therefore, we can assume that every node knows $l_a$ now. Now we have to determine whether $(V, E^{\mathcal{T}}_{<l_a})$ has more than $(1-\epsilon')|V|$ components or not. This can be done by first removing the edges incident to each node with relative load at least $l_a$. Then label each node with the smallest ID of its reachable nodes by using Thurimella's connected component identification algorithm \cite{Thurimella97} in $O(D + \sqrt{n} \log^{*} n)$ rounds. The number of nodes whose label equals to its ID is exactly the number of connected component of the subgraph. This number can be aggregated along the BFS tree in $O(D)$ rounds after every node is labeled.

If $(V, E^{\mathcal{T}}_{<l_a})$ has more than $(1-\epsilon')|V|$ components, then we will compute the cut values induced by each component of $(V, E^{\mathcal{T}}_{<l_a})$. We show that it can be done in $O(D+\sqrt{n})$ rounds in \Cref{sec:componentcut}. On the contrary, if $(V, E^{\mathcal{T}}_{<l_a})$ has less than $(1-\epsilon')|V|$ components, then we will contract the edges with load less than $l_a$ and then recurse. The contraction can be easily implemented by setting the weights of the edges inside contracted components to be $-1$, which is strictly less than the load of any edges. The MST computation will automatically treat them as contracted edges, since an MST must contain exactly $n'-1$ edges with weights larger than $-1$, where $n'$ is the number of connected components. \footnote{We note that the MST algorithm of \cite{KuttenP98} allows negative-weight edges.} %Moreover, assigning weight $0$ will not be sufficient (i.e. we need weight that is strictly less than any loads of non-contracted edges) since we want to have the property that the MST induces a spanning tree in the contracted graph. If we use weight $0$, this might not be true.}\danupon{CHECK: I added this footnote.}.

%\footnote{Note that the }.
%
%\danupon{TO DISCUSS: (1) Does MST computation allow negative weight? (2) Why can't we just use 0 instead of -1? (Hsin-Hao: (1) Yes (2) because we want to have the property that the MST to cross the edges between components in $\mathcal{P}$ exactly $|\mathcal{P}| - 1$ times. If we use $0$, this might not be true)}

\paragraph{Time analysis.}%\danupon{I added paragraph name}
Suppose that we have packed $t$ spanning trees throughout the entire algorithm, the running time will be $O(t(D+ \sqrt{n} \log^{*} n))$. Note that $t = O(\epsilon^{-3} \lambda \log^2 n)$, because we pack at most $O(\epsilon^{-2}\lambda \log n )$ spanning trees in each level of the recursion and there can be at most $O(\epsilon^{-1} \log n)$ levels, since the number of nodes reduces by a $(1-\epsilon')$ factor in each level.
%
%\danupon{Can you give a sentence to explain why there are $O(\epsilon^{-1} \log n)$ levels or refer to where showed this before? (Hsin-Hao: done)}. 
%
The total running time is $O(\epsilon^{-3} \lambda \log^2 n \cdot (D+ \sqrt{n} \log^{*} n))$.

\paragraph{Dealing with graphs with high edge connectivity.} For graphs with $\lambda = \omega(\epsilon^{-2} \log n)$, we can use the well-known sampling result from Karger's \cite{Karger94b} to construct a subgraph $H$ that perserves the values of all the cuts within a $(1\pm\epsilon)$ factor (up to a scaling) and has $\lambda_{H} = O(\epsilon^{-2} \log n)$. Then we run our algorithm on $H$.

\begin{lemma}[\cite{Karger94}, Corollary 2.4]\label{lem:karger}
Let $G$ be any graph with minimum cut $\lambda$ and let $p=2(d+2)(\ln n)/(\epsilon^2 \lambda)$. Let $G(p)$ be a subgraph of $G$ with the same vertex set, obtained by including each edge of $G$ with probability $p$ independently. Then the probability that the value of some cut in $G(p)$ has value more than $(1+\epsilon)$ or less than $(1-\epsilon)$ times its expected value is $O(1/n^d)$. 
\end{lemma}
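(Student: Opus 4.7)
The plan is to combine two standard ingredients: the Chernoff bound, which controls the deviation of a single cut's weight under independent edge sampling, and Karger's celebrated cut-counting bound, which asserts that the number of cuts of $G$ of weight at most $\alpha\lambda$ is $O(n^{2\alpha})$. These are then glued together via a level-wise union bound.

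First I would fix an arbitrary cut $C$ of $G$ of weight $w(C) \geq \lambda$. Viewing $G$ as the unit-weight multigraph used elsewhere in the paper, $C$ consists of $w(C)$ unit edges each retained in $G(p)$ independently with probability $p$, so the weight $X := w_{G(p)}(C)$ is a sum of independent Bernoulli random variables with mean $p \cdot w(C)$. The multiplicative Chernoff bound then yields
\[
\Pr\bigl[\,|X - p\cdot w(C)| > \epsilon\cdot p\cdot w(C)\,\bigr] \;\leq\; 2\exp\!\bigl(-\epsilon^{2}\,p\,w(C)/3\bigr).
\]
Substituting $p = 2(d+2)(\ln n)/(\epsilon^{2}\lambda)$ and writing $w(C) = \alpha\lambda$ with $\alpha \geq 1$, the right-hand side becomes at most $2\,n^{-2(d+2)\alpha/3}$, which decays rapidly as $\alpha$ grows.

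Next I would group the cuts of $G$ by weight and union-bound. For each integer $\alpha \geq 1$, let $\mathcal{C}_{\alpha}$ denote the cuts of weight in $[\alpha\lambda,(\alpha+1)\lambda)$. Karger's cut-counting bound gives $|\mathcal{C}_{\alpha}| = O(n^{2(\alpha+1)})$. Summing the per-cut failure probability over all such levels,
\[
\Pr[\text{some cut deviates by more than a } (1\pm\epsilon) \text{ factor}] \;\leq\; \sum_{\alpha=1}^{\infty} O\!\bigl(n^{2(\alpha+1)}\bigr)\cdot 2\,n^{-2(d+2)\alpha/3}.
\]
For constants chosen so that the exponent $2(\alpha+1) - 2(d+2)\alpha/3$ is bounded above by $-d$ for every $\alpha \geq 1$, the resulting geometric series collapses to $O(n^{-d})$, giving the claimed tail bound.

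The main obstacle is making the constants in $p$ line up so that the exponent in the level-wise union bound is negative with a margin of order $-d$ \emph{simultaneously} for every $\alpha \geq 1$. A completely naive application of the two-sided Chernoff bound as stated above may require a slightly larger constant than $2(d+2)$; Karger sidesteps this either by using the sharper one-sided bounds (upper tail $\exp(-\mu\epsilon^{2}/3)$, lower tail $\exp(-\mu\epsilon^{2}/2)$) together with a continuous version of the cut-counting bound — which controls the \emph{density} of cuts near weight $\alpha\lambda$ rather than merely the cumulative count — or by observing that the leading $n^{2}$ factor from overcounting at level $\alpha$ can be absorbed into a single initial term. Once the constants are tuned, the rest of the argument is routine geometric summation.
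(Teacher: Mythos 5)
The paper offers no proof of this lemma: it is imported verbatim from Karger \cite{Karger94} as Corollary 2.4, so there is no in-paper argument to compare against. Your sketch reconstructs the standard proof of that cited result --- per-cut Chernoff bound, Karger's cut-counting theorem, and a union bound over weight levels --- and the individual computations you actually carry out are correct: the per-cut failure probability $2\exp(-\epsilon^2 p\, w(C)/3) = 2n^{-2(d+2)\alpha/3}$ for a cut of weight $\alpha\lambda$ is right, as is the bound $O(n^{2(\alpha+1)})$ on the number of cuts in the level $[\alpha\lambda,(\alpha+1)\lambda)$.

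The one step you defer --- ``for constants chosen so that the exponent $2(\alpha+1)-2(d+2)\alpha/3$ is bounded above by $-d$ for every $\alpha\geq 1$'' --- genuinely fails as stated: at $\alpha=1$ that exponent equals $(8-2d)/3$, and $(8-2d)/3\leq -d$ would require $d\leq -8$, so no positive $d$ works. With the two-sided $\exp(-\epsilon^2\mu/3)$ bound and the crude level-wise count, the union bound only yields roughly $O(n^{-(2d-8)/3})$, which is strictly weaker than the claimed $O(n^{-d})$ for the \emph{stated} value $p=2(d+2)(\ln n)/(\epsilon^2\lambda)$; no re-discretization of the levels rescues this. You correctly name the repair, which is what Karger actually does: order the cuts $c_1\leq c_2\leq\cdots$ by weight, convert the cut-counting bound into the pointwise estimate $c_k\geq \lambda\ln k/(2\ln n)$ (equivalently, integrate the per-cut failure probability against $df(\alpha)$ with $f(\alpha)\leq n^{2\alpha}$), and use the sharper exponent $\epsilon^2\mu/2$; then the first $n^2$ cuts contribute $n^2\cdot n^{-(d+2)}=n^{-d}$ and the tail $\sum_{k>n^2}k^{-(d+2)/2}=O(n^{-d})$, which is exactly tight. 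So your proposal is the right approach with the genuinely delicate part --- the constant bookkeeping that makes the lemma true at $2(d+2)$ rather than a larger constant --- identified but left open.
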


%\danupon{FOLLOWING PREVIOUS COMMENT: Wait, so you are saying that $H$ is a multigraph that is a subgraph of $G$; i.e. if we think of $H$ as a weighted graph, then weight of edge $e$ in $H$ will be smaller than weight of edge $e$ in $G$? Is this because you sparsify graphs and perhaps scale some weights down? If so, how can you preserve the values of all cuts?} 
%
%\danupon{If we want to be formal, we should state Karger's lemma here.}
%
In particular, let $\epsilon' = \Theta(\epsilon)$ such that $(1+\epsilon) = (1+\epsilon')^2/(1-\epsilon')$. First we will compute $\lambda'$, a 3-approximation of $\lambda$, by using Ghaffari and Kuhn's algorithm. Let $p = 6(d+2) \ln n/(\epsilon'^2\lambda')$ and $H = G(p)$. Since $p$ is at least $2(d+2)  \ln n/(\epsilon'^2\lambda)$, by Lemma \ref{lem:karger}, for any cut $C$, w.h.p.~$(1 - \epsilon')p \cdot w_{G}(C) \leq w_{H_{i}}(C) \leq (1 + \epsilon')p \cdot w_{G}(C)$.
Let $C^{*}$ be the $(1+\epsilon')$-approximate minimum cut we found in $H$. We have that w.h.p.~for any other cut $C'$, $$w_{G}(C^{*}) \leq \frac{1}{p} \cdot \frac{w_{H_i}(C^{*})}{1-\epsilon'} \leq \frac{1}{p} \cdot \frac{(1+\epsilon')\lambda_{H}}{1-\epsilon'} \leq \frac{1}{p} \cdot \frac{(1+\epsilon')w_{H_{i}}(C')}{1-\epsilon'} \leq \frac{(1+\epsilon')^2}{1-\epsilon'} \cdot w_{G}(C') = (1+\epsilon) w_{G}(C')$$

Thus, we will find an $(1+\epsilon)$-approximate minimum cut in $O(\epsilon^{-5}\log^3 n(D+\sqrt{n} \log^{*} n))$ rounds.

\paragraph{Computing the exact minimum cut.} To find the exact minimum cut, first we will compute a 3-approximation of $\lambda$, $\lambda'$, by using Ghaffari and Kuhn's algorithm \cite{GhaffariK13} in $O(\lambda \log n \log \log n (D+\sqrt{n} \log^{*} n))$ rounds.\footnote{Ghaffari and Kuhn's result runs in $O(\log^2 n \log \log n (D+\sqrt{n} \log^{*}n))$ rounds. However, without using Karger's random sampling beforehand, it runs in $O(\lambda \log n \log \log n (D+\sqrt{n} \log^{*} n))$ rounds, which will be absorbed by the running time of our algorithm for the exact minimum cut.} Now since $\lambda \leq \lambda' \leq 3\lambda$, by applying our algorithm with $\epsilon = 1/(\lambda'+1)$, we can compute the exact minimum cut in $O(\lambda^4 \log^2 n (D+\sqrt{n} \log^{*} n))$ rounds.

\paragraph{Estimating the value of $\lambda$.} As described in \Cref{sec:meta algo}, we can avoid the recursion if we just want to compute an approximation of $\lambda$ without actually finding the cut. This gives an algorithm that runs in $O(\epsilon^{-2} \lambda \log n \cdot (D+ \sqrt{n} \log^{*} n))$ time. Also, the exact value of $\lambda$ can be computed in $O((\lambda^3 + \lambda \log\log n) \log n(D+\sqrt{n} \log^{*} n))$ rounds. Notice that the $\lambda \log \log n$ factor comes from Ghaffari and Kuhn's algorithm for approximating $\lambda$ within a constant factor. Similarly, using Karger's sampling result, we can $(1+\epsilon)$-approximate the value of $\lambda$ in $O(\epsilon^{-5} \log^2 n \log \log n (D+\sqrt{n} \log^{*} n))$ rounds.

\subsection{Sequential Implementation}\label{sec:sequential algo}

We show that Algorithm \ref{alg:approxmincut} can be implemented in the sequential setting in $O(\epsilon^{-3} \lambda (m + n \log n) \log n )$ time. To get the stated bound, we will show that the number of edges decreases geometrically each time we contract the graph. 

\begin{lemma}
If $(V, E^{\mathcal{T}}_{< l_a})$ has less than $(1-\epsilon')|V|$ components, then $|E(G / E^{\mathcal{T}}_{< l_a})| \leq |E(G)|/(1+\epsilon')$.
\end{lemma}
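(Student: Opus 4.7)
The plan is to exploit the specific load distribution of the greedy tree packing $\mathcal{T}$ to relate $|E^{\mathcal{T}}_{<l_a}|$ (the contracted edges) to $|E(G/E^{\mathcal{T}}_{<l_a})|$ (the surviving edges), and then invoke the hypothesis $n' < (1-\epsilon')|V|$, where $n'$ denotes the number of components of $(V, E^{\mathcal{T}}_{<l_a})$.

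The key technical step is to establish the two load-sum identities
\[
\sum_{e \in E^{\mathcal{T}}_{<l_a}} \ell^{\mathcal{T}}(e) \;=\; |V|-n', \qquad \sum_{e \in E(G/E^{\mathcal{T}}_{<l_a})} \ell^{\mathcal{T}}(e) \;=\; n'-1.
\]
The intuition is that every $T \in \mathcal{T}$ is an MST with respect to the loads, and every edge of $E^{\mathcal{T}}_{<l_a}$ has (final) load strictly smaller than any edge of $E^{\mathcal{T}}_{\geq l_a}$. Hence each $T$ picks the full spanning forest of $(V, E^{\mathcal{T}}_{<l_a})$ (contributing $|V|-n'$ low-load "inside" edges) together with exactly $n'-1$ crossing edges from $E^{\mathcal{T}}_{\geq l_a}$ to complete the spanning tree of $G$, while every other $E^{\mathcal{T}}_{\geq l_a}$ edge internal to a single component is rejected as cycle-forming. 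Aggregating $|T \cap E^{\mathcal{T}}_{<l_a}| = |V|-n'$ and $|T \cap E(G/E^{\mathcal{T}}_{<l_a})| = n'-1$ over all $T \in \mathcal{T}$ and dividing by $|\mathcal{T}|$ gives the identities.

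Given the identities, applying $\ell^{\mathcal{T}}(e) < l_a$ on $E^{\mathcal{T}}_{<l_a}$ and $\ell^{\mathcal{T}}(e) \geq l_a$ on the surviving edges yields $|E^{\mathcal{T}}_{<l_a}| \geq (|V|-n')/l_a$ and $|E(G/E^{\mathcal{T}}_{<l_a})| \leq (n'-1)/l_a$, so $|E^{\mathcal{T}}_{<l_a}|/|E(G/E^{\mathcal{T}}_{<l_a})| \geq (|V|-n')/(n'-1)$. The hypothesis gives $|V|-n' > \epsilon'|V| \geq \epsilon' n' > \epsilon'(n'-1)$, hence $|E^{\mathcal{T}}_{<l_a}| > \epsilon' \cdot |E(G/E^{\mathcal{T}}_{<l_a})|$. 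Since $E^{\mathcal{T}}_{<l_a}$ and $E(G/E^{\mathcal{T}}_{<l_a})$ are disjoint subsets of $E(G)$, we conclude $|E(G)| \geq |E^{\mathcal{T}}_{<l_a}| + |E(G/E^{\mathcal{T}}_{<l_a})| > (1+\epsilon') \cdot |E(G/E^{\mathcal{T}}_{<l_a})|$, as claimed.

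The hard part is the per-tree reasoning in the first step: strictly speaking, $T_i$ is an MST with respect to the loads of $\{T_1,\dots,T_{i-1}\}$, not with respect to the final loads $\ell^{\mathcal{T}}$, so at time $i$ the partition of $E$ into $E^{\mathcal{T}}_{<l_a}$ and $E^{\mathcal{T}}_{\geq l_a}$ is not yet perfectly reflected in the current edge weights. I expect one handles this either by arguing the two aggregate sums directly via the near-optimality of the greedy packing (\Cref{lem:approxload}), absorbing an $O(\epsilon')$ additive slack into the hypothesis, or by imposing a tie-breaking rule in the per-iteration MST that forces each tree to agree with the final low/high-load dichotomy.
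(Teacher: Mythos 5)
Your argument follows the same route as the paper's proof: lower-bound the total relative load carried by the contracted edges by $|V|-n'$ (where $n'$ is the number of components of $(V,E^{\mathcal{T}}_{<l_a})$), upper-bound the total relative load on the surviving edges, convert both sums to edge counts via the threshold $l_a$, and combine with the hypothesis on $n'$ to get $|E^{\mathcal{T}}_{<l_a}|/|E(G/E^{\mathcal{T}}_{<l_a})|>\epsilon'$. The step you flag as unresolved --- that each tree of the greedy packing spans every component $S$ of $(V,E^{\mathcal{T}}_{<l_a})$ using internal low-load edges --- is precisely the step the paper relies on as well: it asserts $|T\cap E(S)|\ge |S|-1$ without addressing the intermediate-versus-final-load subtlety you raise, so on this point your proposal is at the same level of rigor as the published proof rather than behind it. Two remarks on the surrounding details. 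First, the exact identities are more than you need: the paper uses only the one-sided bounds $\sum_{e\in E^{\mathcal{T}}_{<l_a}}\ell^{\mathcal{T}}(e)\ge |V|-n'$ and $\sum_{e\in E(G/E^{\mathcal{T}}_{<l_a})}\ell^{\mathcal{T}}(e)\le |V|-1$, the latter following trivially from each tree having $|V|-1$ edges with no claim about how many of them cross between components; these already yield the same ratio bound, so you could drop the second identity (and the reasoning about rejected cycle-forming edges) entirely. Second, your closing step, which treats $E^{\mathcal{T}}_{<l_a}$ and $E(G/E^{\mathcal{T}}_{<l_a})$ as disjoint subsets of $E(G)$ and concludes $|E(G)|\ge |E^{\mathcal{T}}_{<l_a}|+|E(G/E^{\mathcal{T}}_{<l_a})|$, is actually more careful than the paper's, which writes this as an equality even though a high-load edge with both endpoints inside one component lies in neither set; the inequality is all that is needed.
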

\begin{proof}
Consider a component $S$ of $(V, E^{\mathcal{T}}_{< l_a})$. Since $E(S) \subseteq E^{\mathcal{T}}_{<l_a}$ and $|T \cap E(S)| \geq |S| - 1$, we have $|S| - 1 \leq \sum_{e \in S} \ell^{\mathcal{T}}(e) < l_a |E(S)|$. By summing this inequality over all components of $(V, E^{\mathcal{T}}_{< l_a})$, we have \begin{equation}\label{eqn:load1} l_a |E^{\mathcal{T}}_{< l_a}|\geq |V| - |V(G / E^{\mathcal{T}}_{< l_a}) | > |V| - (1-\epsilon')|V| = \epsilon' |V| \end{equation}

If we sum up the relative load over each $e \in E(G/E^{\mathcal{T}}_{< l_a})$, we have
\begin{equation}\label{eqn:load2}
l_{a} |E(G/E^{\mathcal{T}}_{< l_a})| \leq \sum_{e \in E(G/E^{\mathcal{T}}_{< l_a})} \ell^{\mathcal{T}}(e) \leq |V|
\end{equation}

Dividing (\ref{eqn:load1}) by (\ref{eqn:load2}), we have $|E^{\mathcal{T}}_{< l_a}|/|E(G/E^{\mathcal{T}}_{< l_a})| > \epsilon'$ and therefore, $|E(G/E^{\mathcal{T}}_{< l_a})| < (|E^{\mathcal{T}}_{< l_a}| + |E(G/E^{\mathcal{T}}_{< l_a})|) / (1+\epsilon') = |E(G)| / (1+\epsilon')$.
\end{proof}

Let $\mathtt{MST}(n, m)$ denote the time needed to find an MST in a graph with $n$-vertices and $m$-edges. Note that Karger \cite{Karger00} showed that the values of the cuts that 1-respect a tree can be computed in linear time. The total running time of Algorithm \ref{alg:approxmincut} will be
%$$O\left(\epsilon'^{-2} \lambda \log n \cdot\left\{\mathtt{MST}(n,m) + \mathtt{MST}(n\cdot (1-\epsilon'),m/(1+\epsilon')) + \mathtt{MST}(n\cdot (1-\epsilon')^2,m/(1+\epsilon')^2) + \ldots\right\}\right)\,.$$
%\danupon{I expanded the above term a bit more. Please check if it's correct.}
$$O\left(\epsilon'^{-2} \lambda \log n \cdot \sum_{i=0}^{\infty} \mathtt{MST}(n(1-\epsilon')^i,m/(1+\epsilon')^i)\right).$$
We know that $\mathtt{MST}(n,m) = O(m)$ by using the randomized linear time algorithm from \cite{KKT95} and notice that $\epsilon = \Theta(\epsilon')$, the running time will be at most $O(\epsilon^{-3} \lambda m \log n )$.%\danupon{Will it help if we use linear-time MST algorithm? If not, we should write a footnote explaining why.}

If the graph is dense or the cut value is large, we may want to use the sparsification results to reduce $m$ or $\lambda$. First estimate $\lambda$ up to a factor of 3 by using Matula's algorithm \cite{Matula93} that runs in linear time. By using Nagamochi and Ibaraki's sparse certificate algorithm \cite{NI92}, we can get the number of edges down to $O(n\lambda)$. By using Karger's sampling result, we can bring $\lambda$ down to $O(\log n / \epsilon^2)$. The total running time is therefore $O(m + \epsilon^{-7} n \log^3 n )$ (by plugging $\lambda=\log n / \epsilon^2$ and $m=n\log n / \epsilon^2$ in the running time in the previous paragraph).  \footnote{In this case, we can also use Prim's deterministic MST algorithm without increasing the total running time. This is because Prim's algorithm runs in $O(m + n\log n)$ time, the $n \log n$ term will be absorbed by $m$, as we have used $m=n\log n / \epsilon^2$.}

%This is a $1/\epsilon^{-3}$ factor worse than Karger's result, but it might be easier to implement, since Karger's result actually uses Gabow's matroid intersection algorithm.

%\section{Open Problems}
%
%Open: Sublinear exact algorithm, $s$-$t$ ..., node connectivity. 
%
%Open problems: Exact mincut, Fully connected network, $s$-$t$ cut (connected to maxflow), 2-cut the tree. 
%
%Our algorithm can be used to test the edge-connectivity of graph in $\poly(\lambda) XXX$ time. Before Thurimell? can only produce a certificate. 

\paragraph{Acknowledgment:} D. Nanongkai would like to thank Thatchaphol Saranurak for bringing Thorup's tree packing theorem \cite{Thorup07} to his attention.

%\newpage

  \let\oldthebibliography=\thebibliography
  \let\endoldthebibliography=\endthebibliography
  \renewenvironment{thebibliography}[1]{%
    \begin{oldthebibliography}{#1}%
      \setlength{\parskip}{0ex}%
      \setlength{\itemsep}{0ex}%
  }%
  {%
    \end{oldthebibliography}%
  }
{ %\small
  \onlyShort{  \newpage}
\bibliographystyle{plain}
\bibliography{references}
}
%\onlyLong{

\appendix
\section*{Appendix}

%\section{Sequential Algorithm}
\section{Finding cuts with respect to connected components}\label{sec:componentcut}

In this section, we solve the following problem. We are given a set of connected components $\{H_1, H_2, \ldots, H_k\}$ of the network $G$ (each node knows which of its neighbors are in the same connected component), and we want to compute, for each $i$, the value $w(C_i)$ where $C_i$ is the cut with respect to $H_i$; i.e., $C_i=(V(H_i), V(G)\setminus V(H_i))$. Every node in $C_i$ should know $w(C_i)$ in the end. We show that this can be done in $O(n^{1/2}+D)$ rounds. 
The main idea is to deal with ``big'' and ``small'' components separately, where a component is big if it contains at least $n^{1/2}$ nodes and it is small otherwise. There are at most $n^{1/2}$ big components, and thus the cut value information for these components can be aggregated quickly through the BFS tree of the network. The cut value of each small component will be computed locally within the component. The detail is as follows. 

First, we determine for each component $H_i$ whether it is big or small, which can be done by simply counting the number of nodes in each component, such as the following. Initially, every node sends its ID to its neighbors in the same component. Then, for $n^{1/2}+1$ rounds, every node sends the smallest ID it has received so far to its neighbors in the same component. For each node $v$, let $s_v$ be the smallest ID that $v$ has received after $n^{1/2}+1$ rounds. If $s_v$ is $v$'s own ID, it construct a BFS tree $T_v$ of depth at most $n^{1/2}+1$, and use $T_v$ to count the number of nodes in $T_v$. (There will be no congestion caused by this algorithm since no other node within distance $n^{1/2}+1$ from $v$ will trigger another BFS tree construction.)  If the number of nodes in $T_v$ is at most $n^{1/2}$, then $v$ broadcasts to the whole network that the component containing it is small.

%Initially, every node $v$ initiates a BFS tree construction with $v$ as a root (see, e.g., \cite{PelegBook}). Then, we proceed this construction for $n^{1/2}+1$ rounds. If a node $u$ has to execute a BFS tree construction for two nodes $v_1$ and $v_2$, it will do so only for nodes with smaller ID to avoid a congestion. 

%
%- It should be easy for each node to detect whether it is in a big or small component. For example, node of smallest ID in each component computes a BFS tree of depth at most sqrt(n) to detect if it's big or not. If its component is big then it broadcast this information to everyone in the network (so that nodes in this component knows this information). How to find a node of smallest ID? Just start computing a BFS tree from everyone and stop if you hit a BFS tree constructed by someone of lower ID (to avoid congestion). 

Now, to compute $w(C_i)$ for a small component $H_i$, we simply construct a BFS tree rooted at the node with smallest ID in $C_i$ and compute the sum $\sum_{u\in V(H_i), v\notin V(H_i)} w(u, v)$ through this tree. To compute $w(C_i)$ for a big component $H_j$, we compute the sum $\sum_{u\in V(H_i), v\notin V(H_i)} w(u, v)$ thorough the BFS tree of network $G$. Since there are at most $n^{1/2}$ big components, this takes $O(n^{1/2}+D)$ time. 

%- Now, nodes in small component C compute the sum \sum_{u\in C, v\notin C} w(u, v). This can be done by aggregating through the BFS tree inside C. This takes O(sqrt(n)) time This gives the cut corresponding to component C, right?

%
%- Now, nodes in small component C compute the sum \sum_{u\in C, v\notin C} w(u, v). This can be done by aggregating through the BFS tree inside C. This takes O(sqrt(n)) time This gives the cut corresponding to component C, right?
%
%- For big components, we will compute such sum by aggregating on the BFS tree of the network. Computing such sum for one component takes O(D) time. Observe that there are sqrt(n) big components. Computing such sum for all these components can be done in O(D+sqrt(n)) time by using a standard scheduling trick. 

\end{document}